\newcommand{\bbr}{\mathbb{R}}
\newcommand{\E}{\mathbb{E}}
\newcommand{\bbn}{\mathbb{N}}
\renewcommand{\P}{\mathbb{P}}
\newcommand{\bba}{\mathbb{A}}
\newcommand{\bbt}{\mathbb{T}}
\newcommand{\fcal}{\mathcal{F}}
\newcommand{\dcal}{D}
\newcommand{\ncal}{\mathcal{N}}
\newcommand{\lcal}{\mathcal{L}}
\newcommand{\Gd}{\Delta}
\newcommand{\Gl}{\Lambda}
\newcommand{\Dt}{{\Delta t}}
\newcounter{modcount}
\newcommand{\modulo}[2]{%
\setcounter{modcount}{#1}\relax
\ifnum\value{modcount}<#2\relax
\else\relax
\addtocounter{modcount}{-#2}\relax
\modulo{\value{modcount}}{#2}\relax
\fi}
\newcommand{\tablepictures}[4][c]{\begin{tabular}[#1]{@{}c@{}}#2\vspace{0.5cm}\\(\alph{#4}) #3\end{tabular}}
\newcounter{gridsearch}
\newcommand{\tabpic}[2]{
    \stepcounter{gridsearch}
    \modulo{\thegridsearch}{2}
    \ifnum\value{modcount}=0
        \tablepictures[t]{#1}{#2}{gridsearch}\\[2.0cm]
    \else
        \tablepictures[t]{#1}{#2}{gridsearch}&~&
    \fi
}
\newtheorem{lemma}{Lemma}[section]
\newtheorem{proposition}[lemma]{Proposition}
\newtheorem{theorem}[lemma]{Theorem}
\newtheorem{corollary}[lemma]{Corollary}
\newtheorem{definition}[lemma]{Definition}
\newtheorem{example1}[lemma]{Example}
\newtheorem{rem1}[lemma]{Remark}
\newtheorem{assumption}[lemma]{Assumption}
\newtheorem{alg1}[lemma]{Algorithm}
\newtheorem{me1}[lemma]{Mechanism}
\newenvironment{remark}{\begin{rem1}\rm}{\end{rem1}}
\newenvironment{example}{\begin{example1}\rm}{\end{example1}}
\newenvironment{alg}{\begin{alg1}\rm}{\end{alg1}}
\newcommand{\T}{\top}
\newcommand{\diag}{\operatorname{diag}}
\newcommand\ind[1]{\mathbf{1}_{\{#1\}}}
\begin{document}

\title{Dynamic clearing and contagion in financial networks}
\author{Tathagata Banerjee\thanks{Washington University in St.\ Louis, Department of Electrical and Systems Engineering, St.\ Louis, MO 63130, USA.} \and
Alex Bernstein\thanks{University of California Santa Barbara, Department of Statistics \& Applied Probability, Santa Barbara, CA 93106.} \and
Zachary Feinstein\thanks{Stevens Institute of Technology, School of Business, Hoboken, NJ 07030, USA, {\tt zfeinste@stevens.edu}. 
The author acknowledges the support from NSF IUCRC CRAFT center research grant (2113906) for this research. The opinions expressed in this publication do not necessarily represent the views of NSF IUCRC CRAFT. The author thanks Park Avenue Finance for their subject matter expertise and insight into this topic. The opinions expressed in this publication do not necessarily represent the view of Park Avenue Finance.}~\thanks{Corresponding author}}
\date{\today}
\maketitle
\abstract{
In this paper we introduce a generalized extension of the Eisenberg-Noe model of financial contagion to allow for time dynamics of the interbank liabilities, including a dynamic examination of default risk. This framework separates the cash account and long-term capital account to more accurately model the health of a financial institution.  In doing so, such a system allows us to distinguish between delinquency and default as well as between defaults resulting from either insolvency or illiquidity. 
\\
\noindent\textbf{Keywords:} Finance; systemic risk; financial contagion; dynamic network; early defaults
}

\section{Introduction}\label{sec:intro}

Financial networks and the contagion of bank failures have been widely studied beginning with the seminal work on financial payment networks by Eisenberg \& Noe \cite{EN01}.  The 2007-2009 financial crisis and credit crunch showed the severe impacts that systemic crises can have on the financial sector and the economy as a whole.  As the cost of such cascading events is tremendous, the modeling of such events is imperative.
Recently there have been significant studies on modeling financial systemic risk and financial contagion.  A major class of models for systemic risk is those based on network models from \cite{EN01} and, its extension which introduced dead-weight losses from defaults, \cite{RV13}.  Notably, these network models generally approach the problem only in a static setting.  

In this work, we will focus on a time-dynamic interbank network model.  Doing so introduces two primary complications beyond prior static systems.  First, system dynamics between default times need to be introduced and studied.  This was considered in \cite{sonin2020continuous} under the special -- time-homogeneous -- case; within this work we consider heterogeneous network structures over time.  Second, the introduction of time dynamics ultimately decouples the cash account and capital account of the firms which coincide exactly in the static time setting.  Importantly, this distinction between liquidity and capital leads to the possibility of delinquency without default as well as differing default notions, i.e., illiquidity and insolvency, which need not coincide. The impact of delinquency, before default, on the health of the financial system was made readily apparent in Spring 2023 with the failures of, e.g., Silicon Valley Bank and First Republic Bank, with the liquidity panic occurring over the course of months.

The organization of this paper is as follows.  In the remainder of this introduction, we provide a detailed literature review and highlight the primary contributions and takeaways of this work.
In Section~\ref{sec:fednow} we highlight stylized facts about real-time gross settlement systems, as implemented in practice, that our proposed model will focus on.
In Section~\ref{sec:setting} we will provide a review of the static Eisenberg-Noe framework.  Of particular interest, in this section, we consider the clearing to be in terms of the equity and losses of the firms, as considered in, e.g., \cite{veraart2017distress,barucca2016valuation} rather than payments as originally studied in \cite{EN01}.
In Section~\ref{sec:continuous} we propose a continuous-time formulation for the Eisenberg-Noe model between default times.  We consider existence and uniqueness of the clearing solutions under external assets modeled by stochastic processes.  
In Section~\ref{sec:default} we present an extended model of the Eisenberg-Noe framework which includes the impact of early defaults.  These defaults can be triggered either by insolvency (negative or insufficient capital) or illiquidity (negative or insufficient cash) events in the interbank network.  With this model, we study the implications of asset dynamics on the type and timing of defaults.  Numerical case studies conclude in Section~\ref{sec:numerics}.
The proofs of the main technical results are provided in Appendix~\ref{sec:continuous-proof}.  Additionally, a brief consideration for a discrete-time formulation, and its use in deriving the continuous-time formulation, is provided in Appendix~\ref{sec:discrete-top}.  
Appendix~\ref{sec:discussion-EN} provides a formulation of the static Eisenberg-Noe system as the solution of our inter-default time dynamic model (Section~\ref{sec:continuous}) under a specific dynamic network construction; this independently replicates the results of~\cite{sonin2020continuous}.

\subsection{Literature Review}

Interbank networks were studied first in \cite{EN01} to model the spread of defaults in the financial system.  In the Eisenberg-Noe framework, financial firms must satisfy their liabilities by transferring assets.  One firm being unable to meet its liabilities due to a shortfall of assets can cause other firms to default on some of their liabilities as well, causing a cascading failure in the financial system.  
The Eisenberg-Noe framework has been extended in multiple directions. \cite{RV13} included bankruptcy costs to the interbank network which have been extended in numerous subsequent works (see, e.g.,~\cite{EGJ14,GY14,AW_15,CCY16,veraart2017distress}).  This has also been studied empirically, e.g. in \cite{Elsinger2006_Risk}.

In this work, we will focus on adding the time dynamics to the interbank network approach.  In fact, the conclusion of \cite{EN01} provides a discussion of future extensions, one of which is the inclusion of multiple clearing dates.  This has been studied directly in \cite{CC15,ferrara2019systemic,bardoscia2017pathways}.  Additionally, \cite{KV16} considers a similar approach to model financial networks with multiple maturities.  All of these works only consider clearing at discrete times, which does not, e.g., permit defaults between clearing times due to a shock to a firm's capital.  \cite{sonin2020continuous} presents a continuous-time clearing model that exactly replicates the static Eisenberg-Noe framework without considerations for early defaults. \cite{fs2019,feinstein2020dynamic,feinstein2021contagious} extends this work in the network model of~\cite{GK10} by presenting a special case in which the large bank limit can be formulated as a McKean-Vlasov equation.  
\cite{lipton2016modern} presents a related framework for dynamic financial contagion with early defaults. 
\cite{chen2021financial} introduced a semi-dynamic crisis index and conducted an empirical study for systemic risk. 
In a related setting, \cite{calafiore2022,calafiore2023} considered a discrete-time Eisenberg-Noe system from a controls perspective.

As we will elaborate on further below, and in particular in Section~\ref{sec:default}, within this time-dynamic interbank network we differentiate types of defaults due to illiquidity and insolvency. Illiquidity has been studied previously with a primary focus on fire sales and illiquid assets.  We refer the interested reader to, e.g.,~\cite{CFS05,CLY14,GLT15,AFM16,feinstein2015illiquid,CS19}. Herein, we differ from those works in that insolvency can fundamentally differ from illiquidity rather than the prior assumption that illiquidity precedes insolvency; that is, we allow for a firm to be (i) solvent and liquid; (ii) solvent and illiquid; (iii) insolvent and liquid; or (iv) insolvent and illiquid which, as far as the authors are aware, is novel in the systemic risk literature.

\subsection{Primary Contributions and Innovations}

Within this work, we construct a time-dynamic Eisenberg-Noe system of interbank obligations that captures the features of said system in continuous-time stochastic setting.  This constructed system is general enough to allow for dynamic and heterogeneous obligations; such generality encodes a realistic and complicated seniority structure that cannot be captured by static models or previous work in continuous-time Eisenberg-Noe systems (e.g. \cite{sonin2020continuous,chen2021financial}).
Furthermore, we encode a grace period during which time a stressed institution can recover from delinquency before a default is actualized. This distinction between delinquency and default follow as in bankruptcy laws in all major economies; we note that prior works on dynamic Eisenberg-Noe models (e.g., \cite{feinstein2020dynamic}) specify default rules that only consider the current state of the system without consideration for historical events as taken herein. 
As will be demonstrated in simple numerical case studies, this construction introduces complex nonlinearities in the system behavior. For instance, we find that tighter regulatory requirements (i.e., shorter grace periods in which a bank can recover from delinquency) precipitate increased defaults, but the wealth of the financial system is non-monotonic as the increased defaults are counteracted with higher recovery rates on payments.  
Finally, in relation to default times, the time dynamics also allow us to distinguish between defaults due to illiquidity (i.e., due to a liquidity shortfall which causes a bank to fail to make its obliged payments on time) and insolvency (e.g., due to either capital regulations or shareholder choice to restructure to maximize long-term wealth); in the static setting of \cite{EN01}, these default notions coincide exactly. However, we note that the regulatory response to prevent these two types of defaults differ and, therefore, being able to distinguish between these events is of great importance for systemic risk management. 

\section{Real-Time Gross Settlement}\label{sec:fednow}
Before continuing to the main results of this work, we wish to review the practical applications of a continuous-time dynamic clearing system. Though most prior research has focused on discrete settlement times, there are numerous examples of real-time gross settlement [RTGS] systems used around the world. An RTGS system, as its name indicates, allows for obligations to be paid continuously whenever they come due without any netting of obligations before payment.

We, specifically, wish to highlight the newly created FedNow payment system that was introduced in Summer 2023. This RTGS system operates around-the-clock all year long with payments made in real-time. As of this writing, FedNow functionally operates in a failure-to-pay model without the typical overdraft protection of Fedwire as the net debit cap of (almost) all institutions is set to 0. This failure-to-pay system means that any liquidity shortfall on the obligor's part will result in realized counterparty risk for the obligee. However, rather than write these missed payments off as a loss, the FedNow system is structured so that these payments are made once sufficient liquidity of the obligor is made available.\footnote{This stylized RTGS system is comparable to wholesale central bank digital currency systems which are under active study in central banks globally.}

Though we allow unpaid liabilities to accrue over time, we also allow the banks to default on their obligations. Within the ISDA Master Agreement, there are 8 forms of default events. We consolidate these rules into 2 types:
\begin{itemize}
\item Defaults due to \emph{illiquidity} in which the firm fails to cover its overdue liabilties within a pre-specified grace period. Within the ISDA Master Agreement this can fall under either Failure to Pay (grace period until the end of the business day) or a Breach of Agreement (with a 30 day grace period) depending on the nature of the obligation. Any firm that has a liquidity shortfall, but not yet in default, is called \emph{delinquent}.
\item Defaults due to \emph{insolvency} in which the firm declares bankruptcy. Such an event would occur if the firm has negative equity on its balance sheet (accounting for all future assets and liabilities).
\end{itemize}
Collectively, we define any bank that is either in delinquency or in default as being \emph{distressed} so as to encompass both terms.

Within this work, inspired by FedNow, we will model a failure-to-pay RTGS system in which unmet obligations automatically roll-forward until either the obligation is covered in full or the obligor defaults (see Section~\ref{sec:continuous}). Furthermore, we model defaults following the rules set out within the ISDA Master Agreement based on insolvency or illiquidity (see Section~\ref{sec:default}).

\section{Eisenberg-Noe Static Clearing}\label{sec:setting}
We begin with some simple notation that will be consistent for the entirety of this paper.  Let $x,y \in \bbr^n$ for some positive integer $n$, then
\[x \wedge y = \left(\min(x_1,y_1),\min(x_2,y_2),\ldots,\min(x_n,y_n)\right)^\T,\] $x^- = -(x \wedge 0)$, and $x^+ = (-x)^-$.  Further, to ease notation, we will denote $[x,y] := [x_1,y_1] \times [x_2,y_2] \times \ldots \times [x_n,y_n] \subseteq \bbr^n$ to be the $n$-dimensional compact interval for $y - x \in \bbr^n_+$.  Similarly, we will consider $x \leq y$ if and only if $y - x \in \bbr^n_+$.

Throughout this paper we will consider a network of $n$ financial institutions.  We will denote the set of all banks in the network by $\ncal := \{1,2,\ldots,n\}$.  Often we will consider an additional node $0$, which encompasses the entirety of the financial system outside of the $n$ banks; this node $0$ will also be referred to as society or the societal node.  The full set of institutions, including the societal node, is denoted by $\ncal_0 := \ncal \cup \{0\}$. We refer to \cite{feinstein2014measures,GY14} for further discussion of the meaning and concepts behind the societal node.

We will be extending the model from \cite{EN01} in this paper.  In that work, any bank $i \in \ncal$ may have obligations $L_{ij} \geq 0$ to any other firm or society $j \in \ncal_0$.  We will assume that no firm has any obligations to itself, i.e., $L_{ii} = 0$ for all firms $i \in \ncal$, and the society node has no liabilities at all, i.e., $L_{0j} = 0$ for all firms $j \in \ncal_0$.  Thus the \emph{total liabilities} for bank $i \in \ncal$ is given by $\bar p_i := \sum_{j \in \ncal_0} L_{ij} \geq 0$ and relative liabilities $\pi_{ij} := \frac{L_{ij}}{\bar p_i}$ if $\bar p_i > 0$ and arbitrary otherwise; for simplicity, in the case that $\bar p_i = 0$, we will let $\pi_{ij} = \frac{1}{n}$ for all $j \in \ncal_0 \backslash \{i\}$ and $\pi_{ii} = 0$ to retain the property that $\sum_{j \in \ncal_0} \pi_{ij} = 1$.
On the other side of the balance sheet, all firms are assumed to begin with some amount of \emph{external assets} $x_i \geq 0$ for all firms $i \in \ncal_0$.
The resultant \emph{clearing payments}, under a pro-rata repayment (i.e., no priority of payments) assumption, satisfy the fixed point problem in payments $p \in [0,\bar p]$
\begin{equation}\label{eq:EN-p}
p = \bar p \wedge \left(x + \Pi^\T p\right).
\end{equation}
That is, each bank pays the minimum of what it owes ($\bar p_i$) and what it has ($x_i + \sum_{j \in \ncal} \pi_{ji} p_j$).
The resultant vector of \emph{cash accounts} for all firms is given by
\begin{equation}\label{eq:equity}
V = x + \Pi^\T p - \bar p.
\end{equation}  
Noting that payments can be written as a simple function of the cash accounts ($p = \bar p - V^-$, recalling the aforementioned notation that $V^- = -(V\wedge 0)$), we provide the following proposition.  We refer also to \cite{veraart2017distress,barucca2016valuation,banerjee2017insurance} for similar notions of utilizing clearing cash accounts instead of clearing payments.  We refer to Appendix~\ref{sec:FDA} for the fictitious default algorithm that can be used to compute these clearing cash accounts.
\begin{proposition}\label{prop:EN-e}
A vector $p \in [0,\bar p]$ is a clearing payments in the Eisenberg-Noe setting \eqref{eq:EN-p} if and only if $p = [\bar p - V^-]^+$ for some $V \in \bbr^{n+1}$ satisfying the following fixed point problem
\begin{equation}\label{eq:EN-e}
V = x + \Pi^\T [\bar p - V^-]^+ - \bar p.
\end{equation}
Vice versa, a vector $V \in \bbr^{n+1}$ is a clearing cash account (i.e., satisfying \eqref{eq:EN-e}) if and only if $V$ is defined as in \eqref{eq:equity} for some clearing payments $p \in [0,\bar p]$ as defined in the fixed point problem \eqref{eq:EN-p}.
\end{proposition}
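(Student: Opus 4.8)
The plan is to prove both implications by direct algebraic manipulation, the engine of which is the elementary componentwise identity
\[ \bar p \wedge (v + \bar p) = \bar p - v^- \qquad \text{for every } v \in \bbr^{n+1}, \]
valid because $\bar p \geq 0$: indeed $\min(\bar p_i, v_i + \bar p_i) = \bar p_i + \min(0, v_i) = \bar p_i - v_i^-$. Combined with the observation already recorded in the text that $p = \bar p - V^-$ whenever $V = x + \Pi^\T p - \bar p$, this reduces the whole statement to checking that the positive-part truncation $[\,\cdot\,]^+$ in \eqref{eq:EN-e} is never active, which is exactly where the sign conditions $x \geq 0$ and $\Pi \geq 0$ (hence $\Pi^\T p \geq 0$) enter.

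For the forward direction, let $p \in [0, \bar p]$ solve \eqref{eq:EN-p} and set $V := x + \Pi^\T p - \bar p$ as in \eqref{eq:equity}. Applying the identity with $v = V$ gives $\bar p \wedge (x + \Pi^\T p) = \bar p \wedge (V + \bar p) = \bar p - V^-$, so the clearing condition \eqref{eq:EN-p} reads $p = \bar p - V^-$. Since $p \geq 0$, this forces $\bar p - V^- \geq 0$, hence $[\bar p - V^-]^+ = \bar p - V^- = p$. Substituting back into the definition of $V$ yields $V = x + \Pi^\T[\bar p - V^-]^+ - \bar p$, i.e.\ $V$ solves \eqref{eq:EN-e}, and $p = [\bar p - V^-]^+$ as claimed.

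For the converse, let $V \in \bbr^{n+1}$ solve \eqref{eq:EN-e} and set $p := [\bar p - V^-]^+$. From $V^- \geq 0$ and $\bar p \geq 0$ one gets $0 \leq p \leq \bar p$ (truncating $\bar p - V^- \leq \bar p$ from above and $0$ from below, using monotonicity of $\max(\cdot,0)$), and on the societal coordinate $\bar p_0 = 0$ forces $p_0 = 0$; thus $p \in [0,\bar p]$. Plugging the definition of $p$ into \eqref{eq:EN-e} gives $V = x + \Pi^\T p - \bar p$, i.e.\ \eqref{eq:equity}, and in particular $V + \bar p = x + \Pi^\T p \geq 0$ because $x \geq 0$, $\Pi \geq 0$, $p \geq 0$; equivalently $V^- \leq \bar p$, so $\bar p - V^- \geq 0$ and therefore $p = [\bar p - V^-]^+ = \bar p - V^-$. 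Finally the identity gives $\bar p \wedge (x + \Pi^\T p) = \bar p \wedge (V + \bar p) = \bar p - V^- = p$, so $p$ solves \eqref{eq:EN-p}.

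The argument is essentially bookkeeping, and the one step I would flag as the real point — the place that can go wrong if one is careless — is the nonnegativity $\bar p - V^- \geq 0$ in the converse direction: there $p$ is not given a priori to lie in $[0,\bar p]$, and this must be extracted from \eqref{eq:EN-e} together with the nonnegativity of the primitives $x$ and $\Pi$. One should also keep the bookkeeping over $\ncal_0$ straight, recalling $\bar p_0 = 0$ and that the $0$-th row of $\Pi$ vanishes, so the societal coordinate is handled trivially throughout.
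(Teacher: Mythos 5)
Your proof is correct and follows essentially the same route as the paper's: both rest on the componentwise identity relating $\bar p\wedge(v+\bar p)$ to $\bar p - v^-$ and on the nonnegativity of $x$ and $\Pi^\T p$ to show the positive-part truncation is inactive. The only difference is that you spell out the step the paper dismisses as trivial (that $\bar p - V^- \geq 0$ in the converse direction), which is a welcome but not substantive addition.
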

\begin{proof}
We will prove the first equivalence only, the second follows similarly.

Let $p \in [0,\bar p]$ be a clearing payment vector. Define the cash account vector $V$ by \eqref{eq:equity}, then it is clear that $V^- = \bar p - p$ by definition as well, i.e., $p = \bar p - V^- \geq 0$.  Thus from \eqref{eq:equity} we immediately recover that the cash account vector $V$ must satisfy \eqref{eq:EN-e}.

Let $p = [\bar p - V^-]^+$ for some cash account vector $V \in \bbr^{n+1}$ satisfying \eqref{eq:EN-e}.  By construction we find
\begin{align*}
p &= [\bar p - V^-]^+ = \bar p - \left(x + \Pi^\T [\bar p - V^-]^+ - \bar p\right)^-
= \bar p - \left(x + \Pi^\T p - \bar p\right)^- = \bar p \wedge \left(x + \Pi^\T p\right).
\end{align*}
We note that $\bar p \geq \left(x + \Pi^\T [\bar p - V^-]^+ - \bar p\right)^-$ can be shown trivially.
\end{proof}

Due to the equivalence of the clearing payments and clearing cash accounts provided in Proposition~\ref{prop:EN-e}, we are able to consider the Eisenberg-Noe system as a fixed point of equity and losses rather than payments.
In \cite{EN01,csoka_herings} results for the existence and uniqueness of the clearing payments (and thus for the clearing cash accounts as well) are provided.  In fact, it can be shown that there exists a unique clearing solution in the Eisenberg-Noe framework so long as $L_{i0} > 0$ for all firms $i \in \ncal$.  We will take advantage of this result later in this paper.  This is a reasonable assumption (as discussed in, e.g., \cite{GY14}) as obligations to society include, e.g., deposits to the banks.

Before continuing to our dynamic model, we wish to consider a reformulation of~\eqref{eq:EN-e} with a functional \emph{relative exposures matrix} $A$, i.e., 
\begin{equation}\label{eq:static-A}
a_{ij}(V) := \begin{cases} \pi_{ij} &\text{if } \bar p_i \geq V_i^- \\ \frac{L_{ij}}{V_i^-} &\text{if } \bar p_i < V_i^- \end{cases} \quad \forall i,j \in \ncal_0,
\end{equation}
so that the clearing cash account problem~\eqref{eq:EN-e} need not explicitly depend on the positive part $[\bar p - V^-]^+$.
For mathematical convenience, here we introduce the functional matrix $A: \bbr^{n+1} \to [0,1]^{(n+1)\times(n+1)}$ to be the relative exposure matrix.  We let $a_{ij}(V)V_i^-$ describe the negative impact firm $i$'s losses have on firm $j$'s cash account.  The truncation at $0$ endogenously encodes the notion of limited liability so that the losses of bank $i$ to all of its counterparties are capped by $L_{i\cdot}$.  
That is,
\[L^\T\vec{1} - A(V)^\T V^- = \Pi^\T [\bar p - V^-]^+\]
for any $V \in \bbr^{n+1}$.  This formulation is such that if the positive part were removed from the right hand side, the relative exposures $A$ would be defined exactly as the relative liabilities $\Pi$ by construction.  In particular, we will define the relative exposures element-wise and pointwise so as to encompass the limited exposures as in~\eqref{eq:static-A}.  If $\bar p_i > 0$ then we can simplify this further as $a_{ij}(V) = L_{ij} / \max\{\bar p_i , V_{i}^-\}$. 
The relative exposures matrix will be used in Section~\ref{sec:continuous} below to aid in the construction of a dynamical system representation of the clearing problem. 

Using the notation and terms above, we can rewrite~\eqref{eq:EN-e} as a (piecewise) linear system as:
\begin{align}
\nonumber V &= x + \Pi^\T[\bar p - V^-]^+ - \bar p = x + L^\T\vec{1} - A(V)^\T V^- - L\vec{1} \\
\label{eq:EN-e-inv} &= [I - A(V)^\T \Lambda(V)]^{-1} (x - [I - A(V)^\T] L\vec{1})
\end{align}
where $\Lambda = \Lambda(V) := \diag(\ind{V < 0})$, which we herein refer to as a \emph{delinquency matrix} is the diagonal matrix of defaulting banks.  The matrix $[I - A(V)^\T \Lambda(V)]^{-1}$ is often called the network multiplier (see, e.g.,~\cite{CLY14,feinstein2017sensitivity}) and captures the propagation of losses throughout the financial network. 
The following proposition proves the invertibility for the network multiplier provided every bank has some obligation to society. We wish to note that, because the delinquency matrix $\Lambda(V)$ is a function of the cash account $V$, the clearing cash account~\eqref{eq:EN-e-inv} does not simply follow from a single matrix inverse., but must be approached as a fixed point problem.

\begin{proposition}\label{prop:Leontief}
For any relative exposure matrix $A \in \bba := \{A \in [0,1]^{(n+1) \times (n+1)} \; | \; A\vec{1} \leq \vec{1}, \; a_{ii} = 0, \; a_{i0} > 0 \; \forall i \in \ncal_0\}$
and any delinquency matrix $\Lambda \in \{0,1\}^{(n+1) \times (n+1)}$ such that $\Lambda_{00} = 0$ and $\Lambda_{ij} = 0$ for $i \neq j$, the matrix $I - A^\T \Lambda$ is invertible with Leontief form, i.e., $(I-A^\T \Lambda)^{-1}=\sum_{k=0}^\infty(A^\T \Lambda)^k$.
\end{proposition}
\begin{proof}
By inspection, for any $A \in \bba$, $(I-A^\T \Lambda)(I+A^\T(I-\Lambda A^\T)^{-1} \Lambda)=I$, i.e., the form of the inverse is provided by $I + A^\T (I - \Lambda A^\T)^{-1} \Lambda$.  We refer to \cite[Theorem 2.6]{feinstein2017sensitivity} for a detailed proof that $(I - \Lambda A^\T)^{-1}$ is nonsingular and is provided by the Leontief inverse.  Therefore, by construction
\begin{align*}
(I-A^\T \Lambda)^{-1}&=(I+A^\T(I-\Lambda A^\T)^{-1} \Lambda)
= I+ A^\T \left(\sum_{k=0}^\infty (\Lambda A^\T)^k\right) \Lambda\\
&= I + \sum_{k=0}^\infty A^\T \Lambda (A^\T)^k \Lambda^k
= I + \sum_{k=0}^\infty (A^\T \Lambda)^{k+1}
=\sum_{k=0}^\infty(A^\T \Lambda)^k.
\end{align*}
\end{proof}

Before continuing to our discussion of time-dynamics, we wish to introduce a simple network which we will use as running example for the rest of this work.
\begin{example}\label{ex:2bank-static}
Consider a $n = 2$ bank system. Let bank $1$ have obligations $L_{12} = L_{10} = 2$ while bank $2$ has obligations $L_{21} = 3$ and $L_{20} = 1$. Finally, let the external assets for these two banks be given by $x_1 = x_2 = 2.1$. It is easy to verify that the \emph{unique} clearing cash accounts are given by:
\[V = (3 \, , \, 1.1 \, , \, 0.1)^\T.\]
Notably, there are no defaults within this system.
\end{example}

\begin{remark}\label{rem:K=V_static}
We conclude this review of the static network setting to consider the interpretation of the cash account $V$. In this setting in which all liabilities mature at the current time, the cash account also provides the capital $K$ of the bank. As we will see in the subsequent sections, when we introduce dynamics to any elements of the network, these two financial constructs no longer need to coincide.
\end{remark}

\section{Inter-Default Time Clearing in a Continuous-Time System}\label{sec:continuous}
Consider now a continuous set of clearing times $\bbt = [0,T]$ for some (finite) terminal time $T < \infty$.
We will use the notation from \cite{cont2013ito} such that the process $Z: \bbt \to \bbr^n$ has value of $Z(t)$ at time $t \in \bbt$ and history $Z_t := (Z(s))_{s \in [0,t]}$.  We will now construct an extension of the continuous-time setting of \cite{sonin2020continuous} in that we allow for liabilities to change over time and for firms to have stochastic external assets.

\begin{remark}
Within this section, we assume that no defaults occur within $\bbt$, i.e., we do not consider the dynamics that occur during such events. We do this so as to highlight the inter-default time dynamics. 
We wish to note that the results within this section can be adjusted so that $\bbt = [\tau_0,\tau_1]$ for any two stopping times $0 \leq \tau_0 < \tau_1 \leq T$ (for some $T < \infty$) without altering any subsequent results; these stopping times can be viewed as the time of default events which will be explored in detail within Section~\ref{sec:default}.
\end{remark}

\subsection{Setting and Dynamical System}
In order to construct a continuous-time model we will begin by considering our network parameters of cash flows and nominal liabilities.
Generally, we will consider the external assets $x: \bbt \to \bbr^{n+1}$ and nominal liabilities $L: \bbt \to \bbr^{(n+1) \times (n+1)}_+$ to be functions of the clearing time, i.e., as assets and liabilities with different maturities.  For simplicity, throughout we are considering the discounted assets and liabilities. We take the discounted values so as to simplify notation without explicitly considering the risk-free rate.
We wish to note that the total liabilities are given by $L(t)\vec{1}$ and the total incoming interbank obligations are given by $L(t)^\T\vec{1}$.
Throughout this formulation we will consider the term $dx(t)$ of marginal change in external assets at time $t$, i.e., the external cash flows.  Similarly, we will consider $dL(t)$ to be the marginal change in nominal liabilities matrix at time $t$ as a function of time; we note that by assumption $dL_{ij}(t) \geq 0$ for all firms $i,j \in \ncal_0$ as, without any payments made, total liabilities should accumulate over time. We assume, as in the static setting in Section~\ref{sec:setting} that all obligations take the same seniority.\footnote{Extensions to account for priority of obligations as in, e.g.,~\cite{E07} can be accomplished but go beyond the scope of this work.}  Theorem~\ref{thm:continuous} provides existence and uniqueness of the clearing cash accounts driven by $(dx,dL)$ when $x(t) = \int_0^t dx(s)$ is a stochastic process of bounded variation and $L(t) = \int_0^t dL(s)$ is deterministic and continuous (e.g., $dL$ does not include any Dirac delta functions).  This setting, and the results on the continuous-time Eisenberg-Noe model, can be extended to the case in which the assets and liabilities are additionally functions of the cash accounts $V$.  For simplicity, we will restrict ourselves so that the parameters are independent of the current cash accounts.

\begin{remark}
Though the external assets $x$ and marginal cash flows $dx(t)$ are typically assumed to be non-negative as shown in the static Eisenberg-Noe framework in the prior section, within the remainder of this work we relax that assumption except where explicitly mentioned otherwise as it is not necessary mathematically. 
\end{remark}

\begin{remark}
Though we introduce the liabilities to be due continuously in time, the formulations for this section can be derived as the limit of a discrete system (see Appendix~\ref{sec:discrete-top}).  Continuous-time formulations will be important in Section~\ref{sec:default} below in which defaults can occur outside of those specific payment dates due to capital (rather than liquidity) shortfalls.  Throughout this work, the discrete-time setting can be approximated by taking a twice-differentiable approximation of ``discrete'' liabilities so that $dL$ is a continuous and finite approximation of Dirac measures.
\end{remark}

\begin{remark}
Within Appendix~\ref{sec:discrete}, a representative balance sheet (Figure~\ref{fig:BalanceSheet}) is provided in a discrete-time setting. Though we consider continuous-time dynamics within the body of this work, the key concepts of inter-default time clearing are provided in the discrete-time setting. With this construction, we can see that the total external assets $x(t)$ are the aggregate of the cash-flows up to time $t$ and, similarly, the liabilities $L(t)$ are the aggregate of all obligations with maturity date $s \leq t$. In addition, and as highlighted mathematically below, any unpaid past liabilities will automatically be refinanced by the creditor on that debt. As discussed in Section~\ref{sec:fednow}, this setting corresponds with the stylized ruleset of the FedNow RTGS system.
\end{remark}

\begin{assumption}\label{ass:society}
The external assets $x: \bbt \to \bbr^{n+1}$ are an adapted process of bounded variation. 
The nominal liabilities $L: \bbt \to \bbr^{(n+1) \times (n+1)}_+$ are deterministic and twice differentiable; for notation we will define $dL(t) = \dot{L}(t) dt$ and $d^2L(t) = \ddot{L}(t) dt^2$.  Further, the relative liabilities to society is bounded from below by a level $\delta > 0$, i.e., $\inf_{t \in \bbt} \frac{dL_{i0}(t)}{\sum_{k \in \ncal_0} dL_{ik}(t)} \geq \delta > 0$ for all banks $i \in \ncal$.
\end{assumption}

As in the static Eisenberg-Noe system, we wish to consider the cash accounts $V: \bbt \to \bbr^{n+1}$ and the capital accounts $K: \bbt \to \bbr^{n+1}$, but now as functions of time.  We note that the introduction of time dynamics naturally leads to the separation of these notions which were identical in the static setting.  In this dynamic setting we can consider $V(t),K(t)$ to be a vector denoting the cash account and capital account (respectively) of all institutions in the financial system at time $t$.  The cash accounts grow and shrink due to the changes in incoming and outgoing cash flows whereas the capital accounts are only impacted by the accounting of the external assets.  In contrast to the static Eisenberg-Noe framework, herein we need to consider the results of the prior times to properly compute the system dynamics.  

\subsection{Continuous Time Evolution of the Capital Account}\label{sec:continuous-K}
First, we consider the capital account $K_i$ of each bank $i$ over time. The capital account encodes the difference between the banks total assets and liabilities. In order to accomplish this, we need to implement accounting rules to determine the valuation for any as-yet unpaid obligations and all future cash flows. We will follow mark-to-market accounting for the external assets $x$ and held-to-maturity or historical price accounting for the interbank obligations $L$. In particular, and for simplicity, we assume that no information is shared by banks and, thus, all obligations are marked in full on the bank balance sheets.\footnote{A more nuanced view of historical price accounting would include probabilities of default implied through, e.g., credit ratings.  As the modeling of credit ratings is beyond the scope of this work (and for simplicity), we only consider two states for accounting purposes: full payment or default.}  Historical price accounting is utilized for interbank obligations as these are non-marketable assets, i.e., there does not exist an open market on which they can be traded.\footnote{If a liquid market existed, interbank obligations could also be measured via mark-to-market pricing. Such considerations are beyond the scope of the current work.  This would require an extension of the network valuation adjustments as presented in, e.g.,~\cite{barucca2016valuation,BF18comonotonic}.}
The mark-to-market valuation of the external assets will be taken under the measure $\P$.  Notably, we do not require $\P$ to be the risk-neutral measure so that bank $i$ can invest in such a way to have positive (or negative) expected growth in the assets as we allow for $x_i(t) \neq \E_t[x_i(T)]$ where $\E_t[\cdot] := \E[\cdot | \fcal_t]$ is the conditional expectation w.r.t.\ the filtration $\fcal$.
Following this construction, we can encode the capital account $K_i(t)$ for each bank $i \in \ncal$ at time $t \in \bbt$ by:
\begin{equation}
\label{eq:K_no_default} K_i(t) = \E_t[x_i(T)] + \sum_{j\in\ncal} L_{ji}(T) - \sum_{j\in\ncal_0} L_{ij}(T).
\end{equation}

We wish to demonstrate the capital account under this setting by extending Example~\ref{ex:2bank-static} to a dynamic setting.
\begin{example}\label{ex:2bank-K}[Example~\ref{ex:2bank-static} continued]
Recall the $n = 2$ bank setting of Example~\ref{ex:2bank-static}. We now wish to extend this example to consider the case in which the external assets $x$ and liabilities $L$ are dynamic in time such that the terminal values of these balance sheet components correspond with the static values provided in that prior example, i.e., $x_1(T) = x_2(T) = 2.1$ and $L_{12}(T) = L_{10}(T) = 2$, $L_{21}(T) = 3$, and $L_{20}(T) = 1$. Due to the deterministic nature of this setting, the capital accounts are
\[K_1(t) = 1.1 \quad \text{and} \quad K_2(t) = 0.1\]
for every time $t \in [0,T]$. Note that because this setting incorporates no default events, the bank capitals exactly match those found in Example~\ref{ex:2bank-static}.
\end{example}

\subsection{Continuous Time Evolution of the Cash Account}

As in Section~\ref{sec:setting}, we will introduce the functional matrix $A: \bbt \to [0,1]^{(n+1) \times (n+1)}$ to be the \emph{relative exposure} matrix.  That is, $a_{ij}(t)V_i(t)^-$ provides the (negative) impact that firm $i$'s losses have on firm $j$'s cash account at time $t \in \bbt$.  For simplicity, and as mentioned above, we assume that all debts owed at any time have the same priority; that is, new obligations and old unpaid obligations are given the same priority and paid off proportionally from incoming assets.  With this concept, we can construct the relative exposure matrix akin to the approach in the static setting as provided in~\eqref{eq:static-A}.  Within this dynamic setting, the obligations from bank $i$ to bank $j$ at time $t$ is $dL_{ij}(t) + a_{ij}(t^-)V_i(t^-)^-$ where $a_{ij}(t^-)V_i(t^-)^-$ is the total amount of debts that roll forward; the total obligations of bank $i$ at time $t$ would similarly be given by $\sum_{k \in \ncal_0} dL_{ik}(t) + V_i(t^-)^-$ to account for any unpaid prior debts.  Assuming continuity of all such objects, we recover the relation
\begin{equation}\label{eq:dynamic-A}
da_{ij}(t)V_i(t)^- = dL_{ij}(t) - a_{ij}(t)\sum_{k \in \ncal_0} dL_{ik}(t)
\end{equation}
for every pair of banks $i,j \in \ncal_0$ and any time $t \in \bbt$.

With this setup, we can consider a differential system for the cash account $V$ and relative exposures $A$.
Briefly, at time $t$, the cash account $V(t)$ grows or shrinks according to the change in value of its primary business activities, i.e., the \emph{realized} cash flows, adjusted by the network multiplier $[I - A(t)^\T \Lambda(V(t))]^{-1}$ (as introduced in Section~\ref{sec:setting}) so as to capture the propagation of losses throughout the financial network.  In this way the change in the cash account follows comparably to the static Eisenberg-Noe fixed point problem as provided by the fictitious default algorithm (provided in Appendix~\ref{sec:FDA}).  The relative exposure matrix $A(t)$ changes over time in order to guarantee the appropriate equal priority of all obligations -- new obligations at time $t$ and all those unpaid up to time $t$ as encoded in~\eqref{eq:dynamic-A}. 
Mathematically, this is encoded in the differential system: 

\begin{align}
\label{eq:continuous-V} dV(t) &= [I - A(t)^\T \Lambda(V(t))]^{-1} \left(dx(t) - [I - A(t)^\T] dL(t) \vec{1}\right)\\
\label{eq:continuous-A} da_{ij}(t) &= \begin{cases} \frac{d^2L_{ij}(t) - a_{ij}(t)\sum_{k \in \ncal_0} d^2L_{ik}(t)}{\sum_{k \in \ncal_0} dL_{ik}(t)} &\text{if } i \in \ncal, \; V_i(t) \geq 0 \\ \frac{dL_{ij}(t) - a_{ij}(t)\sum_{k \in \ncal_0} dL_{ik}(t)}{V_i(t)^-} &\text{if } i \in \ncal, \; V_i(t) < 0 \\ 0 &\text{if } i = 0\end{cases} \quad \forall i,j \in \ncal_0
\end{align}
where $\Lambda(V) \in \{0,1\}^{(n+1)\times(n+1)}$ is the diagonal matrix of banks in delinquency, i.e.,
\[\Lambda_{ij}(V) = \begin{cases}1 &\text{if } i = j \neq 0 \text{ and } V_i < 0 \\ 0 &\text{else}\end{cases} \quad \forall i,j \in \ncal_0\]
and with initial conditions $V(0) \geq 0$ given and $a_{ij}(0) = \frac{dL_{ij}(0)}{\sum_{k \in \ncal_0} dL_{ik}(0)}\ind{i \neq 0} + \frac{1}{n}\ind{i = 0,\; j \neq 0}$ for all firms $i,j \in \ncal_0$.

\begin{remark}\label{rem:initialization}
All subsequent results permit the relaxation of the initial conditions so long as the relative exposures $A(0) \in \bba_0(V(0))$ where
\[\bba_0(v) := \left\{A \in [0,1]^{(n+1)\times(n+1)} \; \left| \; \begin{array}{l} A\vec{1} = \vec{1}, \; a_{ii} = 0, \; a_{i0} \geq \delta \; \forall i \in \ncal, \; a_{0j} = \frac{1}{n} \; \forall j \in \ncal, \\ a_{ij} = \frac{dL_{ij}(0)}{\sum_{k \in \ncal_0} dL_{ik}(0)} \; \forall i \in \ncal: v_i \geq 0, \, \forall j \in \ncal_0\end{array}\right.\right\}.\]
That is, all results follow so long as the dynamics begin from the solution to a fixed-point problem of the given initial conditions. For simplicity of exposition, and as we are interested in the dynamics that lead to distress events, we focus on the case in which banks begin with non-negative cash accounts.
\end{remark}

Recall from Proposition~\ref{prop:Leontief} that $I - A(t)^\T \Lambda(V(t))$ is invertible by standard input-output results.
The differential form in \eqref{eq:continuous-V} can be seen as a direct dynamic version of~\eqref{eq:EN-e-inv} in the static setting so that the cash account clears at every time.
The first case in \eqref{eq:continuous-A} is constructed by noting that $a_{ij}(t) = \frac{dL_{ij}(t)}{\sum_{k \in \ncal_0} dL_{ik}(t)}$ if $V_i(t) \geq 0$ and $i \in \ncal$ and $da_{0j}(t) = 0$ for any firm $j \in \ncal_0$ for all times $t$; the second case in \eqref{eq:continuous-A} follows directly from a reformulation of~\eqref{eq:dynamic-A}. 

\begin{remark}\label{rem:disc-to-cont}
The continuous-time differential system can, indeed, be constructed explicitly as the limit of the discrete-time setting with explicit time steps $\Dt$ (as defined in Appendix~\ref{sec:discrete}).  See Appendix~\ref{sec:disc-to-cont} for such a derivation. This derivation is the reason that the relative exposure matrix provides mathematical convenience as it allows us to directly study $V(t+\Dt)-V(t)$ which limits to $dV(t)$ for the cash account without having to explicitly account for limited liabilities.
\end{remark}

We will complete our discussion of the construction of this differential system by providing some properties on the relative liabilities and exposures matrix $A$. Notably, these properties are those that would be expected from the relative liabilities.  Namely, as a firm recovers from a delinquent state, its relative exposures return to be its instantaneous relative liabilities, that the relative exposures are bounded from below by 0 (and to society by $\delta$ as provided in Assumption~\ref{ass:society}), and the relative exposure matrix is row stochastic at all times. The proof of this proposition can be found in Appendix~\ref{sec:proof-propA}.
\begin{proposition}\label{prop:A}
Let $(dx,dL): \bbt \to \bbr^{n+1} \times \bbr^{(n+1) \times (n+1)}_+$ define a dynamic financial network satisfying Assumption~\ref{ass:society}.
Let $(V,A): \bbt \to \bbr^{n+1} \times \bbr^{(n+1) \times (n+1)}$ be any solution of the differential system \eqref{eq:continuous-V} and \eqref{eq:continuous-A} with initial condition $V(0) \in \bbr^{n+1}_{++}$.
The relative exposure matrix $A(t)$ satisfies the following properties:
\begin{enumerate}
\item\label{prop:A1} For any bank $i \in \ncal$, if $V_i(t) \nearrow 0$ as $t \nearrow \tau$ then $\lim_{t \nearrow \tau} a_{ij}(t) = \frac{dL_{ij}(\tau)}{\sum_{k \in \ncal_0} dL_{ik}(\tau)}$;
\item\label{prop:A2} For all times $t \in \bbt$ and for any bank $i \in \ncal$, the elements $a_{ij}(t) \geq 0$ for all banks $j \in \ncal$ and $a_{i0}(t) \geq \delta$; and
\item\label{prop:A3} For all times $t \in \bbt$ and for any bank $i \in \ncal_0$, the row sums $\sum_{k \in \ncal_0} a_{ik}(t) = 1$.
\end{enumerate}
\end{proposition}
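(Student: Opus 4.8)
The plan is to prove the three properties by induction over the inter-event intervals on which the solution $(V,A)$ of Theorem~\ref{thm:continuous} is constructed: between two consecutive times at which $\Lambda(V)$ changes, \eqref{eq:continuous-V}--\eqref{eq:continuous-A} is a classical ODE in $t$ driven by the given path of the It\^o process $c$, and for each fixed bank $i\in\ncal$ the time axis partitions into \emph{solvent} intervals (where $V_i\ge0$) and \emph{distress} intervals (where $V_i<0$). Write $\Sigma_i(t):=\sum_{k\in\ncal_0}\dot L_{ik}(t)$; Assumption~\ref{ass:society} requires $\Sigma_i(t)>0$ for all $t$, so $\Sigma_i$ is continuous and locally bounded away from $0$, and it gives $\dot L_{i0}(t)\ge\delta\,\Sigma_i(t)$. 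Since $A$ increments by an absolutely continuous integral it is continuous, with continuity across the delicate recovery transitions subsumed in the analysis of~\ref{prop:A1} below. The induction hypothesis I would carry is that properties~\ref{prop:A2} and~\ref{prop:A3} hold, together with the sharper fact that on every solvent interval $a_{ij}(t)=\dot L_{ij}(t)/\Sigma_i(t)$; all of this holds at $t=0$ by the stated initial conditions and Assumption~\ref{ass:society}.

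Property~\ref{prop:A3} is the easiest: for $i=0$ the row $a_{0\cdot}$ is constant in $t$ and sums to $1$, and for $i\in\ncal$, summing the two branches of~\eqref{eq:continuous-A} over $j$ shows $s_i:=\sum_{k\in\ncal_0}a_{ik}$ solves $\dot s_i=(1-s_i)\dot\Sigma_i/\Sigma_i$ on solvent intervals and $\dot s_i=(1-s_i)\Sigma_i/V_i^-$ on distress intervals, both homogeneous linear in $1-s_i$; with $s_i(0)=1$ and continuity of $s_i$ across event times this forces $s_i\equiv1$. For property~\ref{prop:A2}: on a solvent interval the quotient rule shows $t\mapsto\dot L_{ij}(t)/\Sigma_i(t)$ solves the first branch of~\eqref{eq:continuous-A}, so—since it matches $a_{ij}$ at the left endpoint (initial condition, inductive hypothesis if $i$ was already solvent, or \ref{prop:A1} if $i$ has just recovered)—uniqueness gives $a_{ij}=\dot L_{ij}/\Sigma_i\ge0$ and $a_{i0}=\dot L_{i0}/\Sigma_i\ge\delta$ there. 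On a distress interval $(\sigma,\tau)$ I would argue by one-sided comparison: whenever $a_{ij}(t)\le0$ the ODE gives $\dot a_{ij}(t)\ge-a_{ij}(t)\Sigma_i(t)/V_i(t)^-\ge0$, and whenever $a_{i0}(t)\le\delta$ it gives $\dot a_{i0}(t)\ge\bigl(\dot L_{i0}(t)-\delta\Sigma_i(t)\bigr)/V_i(t)^-\ge0$, so, since the invariant holds at $\sigma$ (inherited by continuity from the preceding solvent interval, where the closed form holds), $a_{ij}$ cannot cross $0$ and $a_{i0}$ cannot cross $\delta$ on $(\sigma,\tau)$.

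The substance is property~\ref{prop:A1}. Let $(\sigma,\tau)$ be the distress interval ending in the recovery, and set $w(t):=V_i(t)^->0$ on it and $g_{ij}(t):=a_{ij}(t)-\dot L_{ij}(t)/\Sigma_i(t)$. Differentiating~\eqref{eq:continuous-A} gives $\dot g_{ij}=-(\Sigma_i/w)g_{ij}-h_{ij}$ with $h_{ij}:=\tfrac{d}{dt}(\dot L_{ij}/\Sigma_i)$ bounded near $\tau$ (twice-differentiability of $L$ and $\Sigma_i(\tau)>0$), and $g_{ij}(\sigma)=0$ because $a_{ij}$ coincides with $\dot L_{ij}/\Sigma_i$ at the onset of distress. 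With the integrating factor $\Phi(t):=\exp\!\int_\sigma^t\Sigma_i(s)/w(s)\,ds$ one obtains $g_{ij}(t)=-\Phi(t)^{-1}\!\int_\sigma^t\Phi(s)h_{ij}(s)\,ds$, so the claim is equivalent to $\Phi(t)^{-1}\!\int_\sigma^t\Phi\,h_{ij}\to0$ as $t\nearrow\tau$; provided $\Phi(\tau)=\infty$ this follows by L'H\^opital, the ratio tending to $w(t)h_{ij}(t)/\Sigma_i(t)\to0$.

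Everything thus reduces to showing $\int^\tau ds/w(s)=\infty$ (so that $\Phi(\tau)=\infty$), which is exactly where the hypothesis $V_i(t)\nearrow0$ is used, and which I expect to be the main obstacle. A monotone approach forces the continuous semimartingale $V_i$ to have bounded variation on $(\sigma,\tau)$, hence a constant martingale part there, so $\dot V_i$ exists and is bounded near $\tau$—its drift is bounded because $\mu(\cdot,c(\cdot))$ and $\dot L$ are bounded on compacts and $(I-A^\T\Lambda(V))^{-1}$ is uniformly bounded by Proposition~\ref{prop:Leontief}—whence $w(s)=|V_i(s)-V_i(\tau)|\le C(\tau-s)$ and the integral diverges. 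The delicate point to pin down is precisely this: without the monotonicity built into the notation $V_i(t)\nearrow0$, a diffusive approach could let $w$ vanish like $\sqrt{\tau-s}$ or slower, making $\Phi(\tau)$ finite, so that $g_{ij}(\tau)\ne0$ and $\lim_{t\nearrow\tau}a_{ij}(t)$ differs from $\dot L_{ij}(\tau)/\Sigma_i(\tau)$; establishing that $\nearrow$ supplies the needed Lipschitz-in-$t$ regularity of $V_i$ near $\tau$ is the crux of the argument.
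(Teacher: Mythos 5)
Your proposal is correct and, at its core, follows the same route as the paper: properties \ref{prop:A2} and \ref{prop:A3} are handled exactly as in the paper's proof (closed form $a_{ij}=\dot L_{ij}/\sum_k\dot L_{ik}$ on solvent stretches, a sign-of-the-derivative invariance argument for $a_{ij}\ge 0$ and $a_{i0}\ge\delta$ in distress, and the observation that the row sum satisfies a homogeneous linear ODE in $1-s_i$ so that $s_i\equiv 1$), and property \ref{prop:A1} is, as in the paper, the explicit integrating-factor solution of the linear ODE for $a_{ij}$ followed by a L'H\^opital limit, which collapses to $\dot L_{ij}(\tau)/\sum_k\dot L_{ik}(\tau)$. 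The one substantive difference is that you make explicit, and then discharge, the hypothesis the paper's L'H\^opital step silently uses: that the integrating factor $\exp\int^t \sum_k \dot L_{ik}(s)/V_i(s)^-\,ds$ diverges as $t\nearrow\tau$, equivalently $\int^\tau ds/V_i(s)^-=\infty$. The paper simply differentiates numerator and denominator without checking the $\infty/\infty$ form (and lets the arbitrary initial value $a_{ij}(0)$ be killed by $e^{-\nu(t)}$, which again needs $\nu(\tau)=\infty$), whereas you derive the divergence from the monotonicity encoded in ``$V_i(t)\nearrow 0$'': monotone paths have finite variation, so the martingale part of $V_i$ is constant on the approach interval, the drift is locally bounded (Leontief bound on $(I-A^\T\Lambda)^{-1}$), hence $V_i(s)^-\le C(\tau-s)$ and the integral diverges. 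Your closing remark is also apt: if ``$\nearrow$'' were read merely as one-sided convergence, a diffusive approach to $0$ could make the integral finite and the stated limit could fail, so your added lemma is exactly the regularity the argument needs; this is a genuine gap in the paper's own proof that your write-up repairs under the monotone reading, at the cost of restricting to that reading. Your use of $g_{ij}(\sigma)=0$ at the onset of distress (rather than an arbitrary initial value) is harmless and consistent with the inductive structure you set up.
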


With the differential construction of \eqref{eq:continuous-V} and \eqref{eq:continuous-A}, we seek to prove existence and uniqueness of the clearing solutions.  For notational simplicity, define the space of relative exposure matrices\footnote{The definition provided here, and used for the remainder of this work, is a subset of that given in Proposition~\ref{prop:Leontief} to enforce the obligations to society as stated in Assumption~\ref{ass:society}.}
\[\bba := \left\{A \in [0,1]^{(n+1) \times (n+1)} \; | \; A\vec{1} = \vec{1}, \; a_{ii} = 0, \; a_{i0} \geq \delta \; \forall i \in \ncal, \; a_{0j} = \frac{1}{n} \; \forall j \in \ncal\right\}.\]
From Proposition~\ref{prop:A}, we have already proven that if $(V,A): \bbt \to \bbr^{n+1} \times \bbr^{(n+1) \times (n+1)}$ is a solution to the continuous-time Eisenberg-Noe system then $A(t) \in \bba$ for all times $t \in \bbt$.
\begin{theorem}\label{thm:continuous}
Let $\bbt = [0,T]$ be a finite time period and let $(dx,dL): \bbt \to \bbr^{n+1} \times \bbr^{(n+1) \times (n+1)}_+$ define a dynamic financial network satisfying Assumption~\ref{ass:society}.
There exists a unique solution to the clearing cash accounts and relative exposures $(V,A)$ satisfying \eqref{eq:continuous-V} and \eqref{eq:continuous-A} if $V(0) \in \bbr^{n+1}_{++}$.
\end{theorem}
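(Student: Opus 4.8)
The plan is to get \emph{existence} by passing to the limit in the $\Dt$-discretization of Corollary~\ref{cor:discrete-cont} and \emph{uniqueness} by a Gronwall estimate on a Lipschitz reformulation of \eqref{eq:continuous-V}. First I would make two reductions that tame the apparent pathologies of the system. The singularity $1/V_i(t)^-$ in \eqref{eq:continuous-A} is only apparent: passing to the \emph{exposure amounts} $E=(e_{ij})$, $e_{ij}(t):=a_{ij}(t)V_i(t)^-$, one finds by differentiating and substituting \eqref{eq:continuous-A} that on $\{V_i<0\}$ the exposures obey $de_{ij}(t)=-a_{ij}(t)\,dV_i(t)+dL_{ij}(t)-a_{ij}(t)\sum_{k\in\ncal_0}dL_{ik}(t)$ while $e_{ij}\equiv 0$ on $\{V_i\ge 0\}$, and by the continuity property in Proposition~\ref{prop:A}\eqref{prop:A1} the ratio $a_{ij}=e_{ij}/V_i^-$ extends continuously whenever a bank enters or leaves distress. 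Next, multiplying \eqref{eq:continuous-V} through by $I-A(t)^\T\Lambda(V(t))$ and using the coordinatewise identity $\Lambda(V)V=-V^-$ (legitimate since $\Lambda_{00}=0$ and the societal node stays solvent), the Tanaka--Meyer formula rewrites the wealth equation as a Skorokhod-type relation $dV(t)+A(t)^\T dV(t)^-=dc(t)-dL(t)^\T\vec 1+A(t)^\T dL(t)\vec 1+\tfrac12 A(t)^\T d\ell(t)$, where $\ell$ records the local times of the $V_i$ at $0$; in this form the coefficients become globally Lipschitz in the state $(V,E)$.

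\textbf{A priori control.} The quantitative engine is the uniform invertibility and Lipschitz dependence of the Leontief operator $A\mapsto[I-A^\T\Lambda(V)]^{-1}$ on $\bba$. For $A\in\bba$ the matrix $A^\T\Lambda(V)$ is column-substochastic, and since every distressed bank $i\in\ncal$ has $a_{i0}\ge\delta>0$ while the societal node is never distressed, a fixed $\delta$-fraction of mass escapes at each application; hence $\sum_{m\ge 0}(A^\T\Lambda(V))^m$ converges, which is Proposition~\ref{prop:Leontief}, giving the uniform bound $\|[I-A^\T\Lambda(V)]^{-1}\|\le 1/\delta$ in the relevant operator norm, and the resolvent identity upgrades this to Lipschitz dependence on $A\in\bba$. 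Together with Assumption~\ref{ass:society} (linear growth and global Lipschitz continuity of $\mu,\sigma$ in the cash-flow argument, boundedness of $\dot L$), this yields linear-growth and Lipschitz bounds for the coefficients of the reformulated $(V,E)$ system, uniformly over $A\in\bba$ and over the step size. Standard moment estimates for the $\Dt$-discrete schemes \eqref{eq:discrete-Vdt}--\eqref{eq:discrete-Adt} --- which exist and are jointly continuous in $(t,\Dt)$ by Corollary~\ref{cor:discrete-cont} and have finite moments by the argument of Corollary~\ref{cor:discrete} --- then give tightness of $\{(V(\cdot,\Dt),E(\cdot,\Dt))\}_{\Dt>0}$ in $C([0,T])$; any subsequential limit is adapted, satisfies $A(t)\in\bba$ for all $t$ by Proposition~\ref{prop:A}, and, passing to the limit in \eqref{eq:discrete-Vdt-limiting} and \eqref{eq:discrete-Adt-limiting} exactly as in the derivation of \eqref{eq:continuous-V}--\eqref{eq:continuous-A}, solves the differential system from any $V(0)\in\bbr^{n+1}_{++}$. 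For uniqueness I would subtract the Skorokhod-form equations for two solutions, apply It\^o to the squared difference, and close a Gronwall inequality using the Lipschitz bounds above together with the $1$-Lipschitz continuity of $x\mapsto x^-$.

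\textbf{Main obstacle.} The hard part will be the behaviour at the zero-wealth boundary, where $\Lambda(V)$ is discontinuous in the state: because the cash flows carry a genuine Brownian component, a bank's wealth generically recrosses $0$ infinitely often near any crossing, so the naive scheme ``solve on each interval where the distressed set is frozen, then patch'' involves infinitely many switching times and does not directly exhaust $[0,T]$. I expect most of the technical work to lie in justifying the Tanaka--Meyer reformulation and verifying that the local-time terms assemble into a well-posed Skorokhod-type reflection, so that the reformulated equation really is an ordinary Lipschitz SDE to which the classical strong existence/uniqueness theory and the limit/Gronwall arguments above apply; the ``inter-event interval'' picture used informally in the derivation and in Algorithm~\ref{alg:continuous} is then recovered a posteriori on the open, almost surely full-measure set $\{t\in\bbt:V_i(t)\ne 0\ \forall i\in\ncal\}$. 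The remaining checks --- that no bank's liabilities to society vanish so that $I-A^\T\Lambda(V)$ is always invertible, and that $V$ remains $\bbr^{n+1}$-valued with finite moments --- are routine consequences of Assumption~\ref{ass:society} and Corollary~\ref{cor:discrete}.
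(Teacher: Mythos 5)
Your proposal takes a genuinely different route from the paper and, as written, has gaps at its two load-bearing steps. The paper does not reformulate via local times at all: it defines stopping times $\tau_m$ at which the distressed set changes, freezes $\Lambda(V)$ on each interval $(\tau_m,\tau_{m+1}]$, verifies that the resulting coefficients $\bar\mu,\bar\sigma$ (built from the Leontief inverse, which is bounded and Lipschitz in $A\in\bba$ essentially as in your a priori paragraph, via Proposition~\ref{prop:Leontief} and the resolvent identity) satisfy local linear growth and local Lipschitz conditions, applies the classical strong existence/uniqueness theorem on each inter-event interval, and patches. The accumulation of switching times that you cite as fatal to this scheme is addressed there: $(V,A)$ is confined to an almost surely compact set, so $(V(\tau^*),A(\tau^*))$ exists at $\tau^*=\sup_m\tau_m$ by continuity and the construction restarts from $\tau^*$, exhausting $[0,T]$.

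Your alternative leaves its crucial steps unproved. The Tanaka--Meyer identity you write is formally consistent, but the resulting relation is not ``an ordinary Lipschitz SDE'': it still contains the unknown local-time measures $d\ell_i$, both explicitly and inside $dV^-$, and since nothing constrains $V$ to a half-space this is not a Skorokhod reflection problem, so no off-the-shelf well-posedness theory applies --- you name this as the main obstacle but offer no argument to overcome it. The Gronwall uniqueness also does not close as sketched: subtracting two solutions leaves terms such as $\tfrac12 A^{\T}(d\ell-d\tilde\ell)$ and $A^{\T}dV^- - \tilde A^{\T}d\tilde V^-$, and differences of local times of two distinct semimartingales are not controlled by $\lvert V-\tilde V\rvert$ or by the $1$-Lipschitz continuity of $x\mapsto x^-$. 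Finally, for existence via $\Dt\searrow 0$, tightness is plausible, but identifying a subsequential limit as a solution of the discontinuous system \eqref{eq:continuous-V}--\eqref{eq:continuous-A} ``exactly as in the derivation'' is not legitimate, since that derivation is formal precisely at the zero-crossing times you flag; nor can you borrow Lemma~\ref{lemma:limiting}, whose proof uses the uniqueness asserted by Theorem~\ref{thm:continuous}. In short, the proposal is an interesting alternative program, but the reformulation and the uniqueness estimate on which it rests are not established, so it does not yet constitute a proof.
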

While the full proof is provided in Appendix~\ref{sec:proof-continuous}, the result follows from traditional arguments used in proving similar statements for jump-diffusion models.

We conclude this section by continuing our running example to demonstrate how the specific dynamics of the balance sheet of a bank can alter the realized path of the cash account.
\begin{example}\label{ex:2bank-V}[Example~\ref{ex:2bank-K} continued]
Consider the dynamic $n = 2$ bank setting of Example~\ref{ex:2bank-K}. Without loss of generality, we will assume $T = 1$. To demonstrate the impact of dynamic network effects on the cash account $V$, we will consider the external assets to be constant over time at $x_1(t) = x_2(t) = 2.1$ for all times $t \in [0,1]$.
We consider the network of obligations to vary over time so that bank 1 owes to bank 2 before it collects anything from bank 2, i.e.,
\[L_{12}(t) = 4t\ind{t \in [0,0.5]}, \quad L_{10}(t) = 2t, \quad L_{21}(t) = 6t\ind{t \in [0.5,1]}, \quad L_{20}(t) = t\]
with no other obligations.\footnote{Though these obligations are not twice differentiable at $t = 0.5$, we are able to find the unique solution up to this time and then restart the system from that point.} We wish to note that this network aggregates, at the terminal time $T$ to the setup described in Example~\ref{ex:2bank-K}.

For this system there are three stopping times that partition the interval $t \in [0,1]$: (i) the delinquency time of bank $1$ at $t = 0.35$; (ii) the change-over in obligations at time $t = 0.5$; and (iii) the recovery time of bank $1$ at $t = 0.725$. As such we can construct the relative exposures matrix $A(t)$ and cash accounts $V(t)$ as piecewise functions of time over the intervals $[0,0.35]$, $[0.35,0.5]$, $[0.5,0.725]$, and $[0.725,1]$.
\begin{itemize}
\item For time $t \in [0,0.35]$: During this interval, both banks 1 and 2 are solvent and liquid, and therefore are making their payments in full. Due to the linear nature of the obligations during this time, it is clear that the relative exposures are flat with $a_{12}(t) = 2/3$ and $a_{21}(t) = 0$ (with obligations to society implicitly defined from these). The clearing cash accounts follow exactly the linear form implied by the obligations with $V_1(t) = 2.1-6t$, $V_2(t) = 2.1+3t$, and $V_0(t) = 3t$.
\item For time $t \in [0.35,0.5]$: At time $t = 0.35$, the cash account of bank 1 hits 0 -- $V_1(0.35) = 0$ -- and this bank enters distress. This distress does not impact the relative exposures of $a_{12}(t) = 2/3$ and $a_{21}(t) = 0$, but means that the obligations from bank 1 do not accumulate to bank 2 or society. Thus the clearing cash accounts follow the updated formulas $V_1(t) = 2.1-6t$, $V_2(t) = 3.5-t$, and $V_0(t) = 0.7+t$. 
\item For time $t \in [0.5,0.725]$: Bank 2 begins paying obligations to bank 1 at time $t = 0.5$. These payments filter back from bank 1 to repay the unmet liabilities that accrued during $[0.35,0.5]$. Therefore any obligation that was paid to bank 1 is immediately sent as payments to bank 2 and society. 
Thus, immediately, we can conclude that bank 1's cash account needs to evolve as $V_1(t) = -2.9+4t \leq 0$. 
With this cash account, and knowing the initial relative exposure $a_{12}(0.5) = 2/3$, we solve the differential equation $da_{12}(t) = -2a_{12}(t)/(2.9-4t)$ to recover $a_{12}(t) = 4\sqrt{7.25-10t}/9$ (with $a_{10}(t)$ defined implicitly from this); notably, these relative exposures are decreasing as bank 1 pays off its debts to bank 2 while still accumulating further obligations to society.
In addition, $a_{21}(t) = 6/7$ by simple construction. With these values we can solve the differential system for the cash accounts of bank 2 and society:
\[V_2(t) = 7.1 - \frac{16\sqrt{10}}{9}(0.725-t)^{3/2} - 7t \quad \text{and} \quad V_0(t) = -2.9 + \frac{16\sqrt{10}}{9}(0.725-t)^{3/2} + 7t.\]
\item For time $t \in [0.725,1]$: At time $t = 0.725$, bank 1 has finally met all of its previously unpaid obligations and, therefore, exits distress. Therefore the evolution of the system returns to the linear construction to recover $V_1(t) = -2.9+4t$, $V_2(t) = 7.1-7t$, and $V_0(t) = 3t$.
\end{itemize}
The evolution of these cash accounts is provided in Figure~\ref{fig:2bank-V}.
We wish to note that the final values of these cash accounts $V(t) = (3,1.1,0.1)$ match the static cash accounts because all banks were solvent and liquid at the terminal time. We refer the interested reader to Appendix~\ref{sec:discussion-EN} for a discussion about when this dynamic model can replicate the static Eisenberg-Noe clearing solutions.
\begin{figure}[h]
\centering
\includegraphics[width=0.4\textwidth]{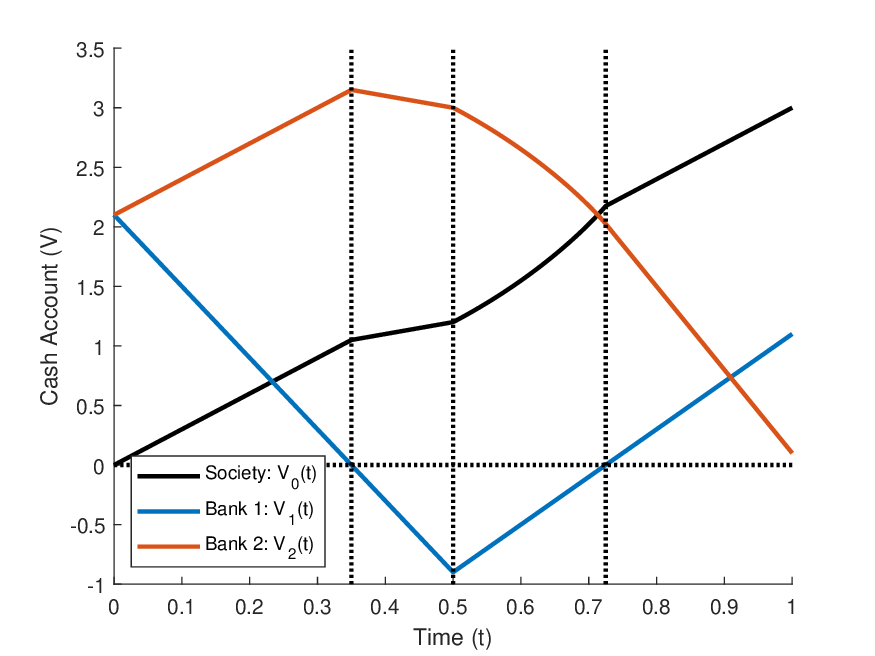}
\caption{Example~\ref{ex:2bank-V}: Clearing cash account over time with system transition times highlighted with dotted black lines.}
\label{fig:2bank-V}
\end{figure}
\end{example}

\section{Continuous-Time Clearing with Early Defaults}\label{sec:default}
\subsection{Model Formulation}\label{sec:default-model}

The dynamic framework presented thus far only considers the inter-default time system dynamics.  In this section we consider the default events and construct a holistic dynamic Eisenberg-Noe system.  Defaults in a \emph{dynamic} model can occur due to either illiquidity or insolvency (see Section~\ref{sec:default-rule} or, e.g.,~\cite{fs2019,feinstein2020dynamic}).  An illiquidity induced default for bank $i$ is caused by having insufficient cash-on-hand $V_i$.  An insolvency induced default for bank $i$ is caused by the capital account $K_i$ being too low. Because we are introducing defaults, the time-evolution of the $K_i$ requires additional dynamics beyond those presented in~\eqref{eq:K_no_default}; these are presented in~\eqref{eq:K}.  As we will show, these default events may occur either at positive capital levels due to capital regulations or nonpositive levels due to the incentive for shareholders to declare an early default in order to restructure to maximize long-term wealth.  As highlighted in Remark~\ref{rem:K=V_static}, in the static setting of~\cite{EN01,RV13,GK10} illiquidity and insolvency are identical notions.  As with the previous section, the proofs of the results herein are found in Appendix \ref{sec:continuous-proof}.

\begin{remark}\label{rem:continuous}
Though the cash account and liquidity questions can naturally be considered in discrete-time (as constructed in, e.g., Appendix~\ref{sec:discrete}), questions of solvency require a continuous-time construction as capital can drop below the required levels due to the value of the external investments or cash flow which can occur outside of dates at which obligations are due.
\end{remark}

It remains to consider the impacts that defaults have on the cash and capital accounts.
These defaults occur at stopping times $\tau$; specifically, let $\tau_i \in \lcal_{T}^{\infty}(\bbr_+)$ (i.e., a uniformly bounded $\fcal_T$-measurable random variable) denote the default time of bank $i$.  As detailed below, these defaults can be generated by liquidity (i.e., the cash account $V$) or solvency (i.e., the capital account $K$); for the present consideration we simply wish to consider that these are known stopping times.
Following the bankruptcy costs imposed in, e.g.,~\cite{RV13}, at default times, constant recovery rates of $\alpha,\beta,\gamma \in [0,1]$ are imposed on the liquid assets, the as-yet unpaid interbank assets from nondefaulted firms, and concurrent payments from simultaneous defaults respectively which impact both the system liquidity and system capital.
As highlighted in \cite{KV16}, and further studied in \cite{feinstein2020dynamic}, early defaults cause the cash account of banks to jump upward but the capital valuation to jump downward.  This occurs due to two countervailing effects: (i) liquidity increases as an early default involves an early repayment of future obligations with recovery rates $\alpha,\beta,\gamma$, but (ii) capital decreases due to the bankruptcy costs $1-\alpha,1-\beta,1-\gamma$ and historical price accounting of unpaid future obligations.
\begin{assumption}\label{ass:recovery-rates}
Throughout this work we will assume $\beta \geq \gamma$.  If this relation does not hold then it is possible that a default cascade can result in a greater fraction of outstanding obligations being repaid than if each default occurred in isolation; this could lead to cyclical default events as in~\cite{KV16}.  In fact, it is often desirable to assume $\alpha \geq \beta \geq \gamma$ due to the liquidity of these asset classes and the aforementioned ranking of $\beta,\gamma$.
\end{assumption}

We now wish to provide the mathematical formulation for the cash and capital accounts under default times $\tau$ and the financial rule described above.
First, the cash account $V$ follows the differential dynamics of~\eqref{eq:continuous-V} under the (as-yet) nondefaulted interbank obligations between default times, i.e.,
\begin{align*}
dV(t) &= \left[I - A(t)^\T\diag(\ind{\tau > t}) \Lambda(V(t))\right]^{-1}\left(dx(t) - [I - A(t)^\T\diag(\ind{\tau > t})]dL(t)\vec{1}\right).
\end{align*}
The cash account of bank $i$, however, has a jump effect at a default time of one of its counterparties; this occurs due to the early repayment of future obligations from bankruptcy settlement (and as detailed in, e.g.,~\cite{KV16}). This modification at the default time of bank $j$ is given by a joint fixed point problem between the cash account of the nondefaulting banks (i.e., banks $i$ such that $\tau_i > \tau_j$).
Notably, the cash account of any bank increases immediately following a default event due to, e.g., the auctioning of future assets for the defaulting firm to cover all of its unpaid (past and future) obligations; this is discussed in more detail in~\cite{CC15,KV16,feinstein2020dynamic}.
At the default time $\tau_i$, we need to consider the amount that any bank in this default cascade may pay.  These payments $P_i$ for bank $i$ are made at the default time $\tau_i$ following the recovery rates $\alpha,\beta,\gamma$ if the available assets are unable to satisfy the unpaid liabilities.  Specifically, these available assets are the sum total of the liquid holdings, the payments from other simultaneous defaults, and the \emph{recovered} value of the unpaid interbank assets from nondefaulted firms; the interbank assets are reduced by the recovery rate $\beta$ throughout as these are assumed to be illiquid and, as such, would not receive payment in full in, e.g., an auction (see~\cite{CC15}).  Mathematically, the clearing cash account fixed point problem at $\tau_j^+$ (i.e., the right-limit of time at $\tau_j$) is provided by the joint fixed point between $V_i(\tau_j^+)$ for $i$ such that $\tau_i > \tau_j$:
\begin{align*}
V_i(\tau_j^+) &= V_i(\tau_j) + \sum_{k \in \ncal} \left(\bar a_{ki}P_k \ind{\tau_k = \tau_j} + a_{ki}(\tau_j)\left[V_k(\tau_j)^- - V_k(\tau_j^+)^-\right]\ind{\tau_k > \tau_j} \right)\\
\bar a_{ki} &:= \frac{L_{ki}(T)- L_{ki}(\tau_k)+a_{ki}(\tau_k)V_k(\tau_k)^-}{\sum_{\ell \in \ncal_0} \left[L_{k\ell}(T)-L_{k\ell}(\tau_k)\right]+V_k(\tau_k)^-}
\end{align*}
and the payments $P_i$ made by defaulting firms $i$ such that $\tau_i = \tau_j$:
\begin{align*}
P_i &= \bar P_i \wedge \left[\alpha X_i + \beta F_i + \gamma\Psi_i(P,V(\tau_i^+))\right].
\end{align*}
Within the above fixed point problem, we set
\begin{align*}
\bar P_i &:= V_i(\tau_i)^- + \sum_{j\in\ncal_0}\left[L_{ij}(T) - L_{ij}(\tau_i)\right]\\
X_i &:= V_i(\tau_i)^+ + \E_t[x_i(T)] - x_i(\tau_i) \\ 
F_i &:= \sum_{j\in\ncal}\left[L_{ji}(T)-L_{ji}(\tau_i)+a_{ji}(\tau_i)V_j(\tau_i)^-\right]\ind{\tau_j > \tau_i} \\
\Psi_i(P,V(\tau_i^+)) &:= \sum_{k\in\ncal} \left(\bar a_{ki} P_k\ind{\tau_k = \tau_i} + a_{ki}(\tau_i)\left[V_k(\tau_i)^- - V_k(\tau_i^+)^-\right]\ind{\tau_k > \tau_i}\right).
\end{align*}
That is, $\bar P_i$ denotes bank $i$'s total unpaid liabilities at its default time $\tau_i$, $X_i$ denotes the value of bank $i$'s liquid assets at the time of default (i.e., the positive part of the cash account plus the expected cash flows), $F_i$ denotes the value of the unpaid interbank assets owed to bank $i$, and $\Psi_i(P,V(\tau_i^+))$ denotes bank $i$'s recovered liquidity due to payments at its default time through either simultaneous defaults or the recovery of unpaid old obligations from (as yet) undefaulted firms. Notably, as these payments are made by defaulting firms, the recovery rates $\alpha,\beta,\gamma$ are applied to all assets even if sufficient assets exist for full repayment of obligations.  As noted above in Assumption~\ref{ass:recovery-rates}, if $\gamma > \beta$ then it is possible for a default cascade to result in higher payments than if all such firms default in quick succession or even a cycle in which a cascaded default (see the definition below) rescues another defaulting firm.
The relative exposures at the default time update as well due to the possibility of covering previously unpaid obligations; specifically, $a_{ki}(\tau_j^+) = a_{ki}(\tau_j)$ if $V_k(\tau_j)V_k(\tau_j^+) > 0$ and $a_{ki}(\tau_j^+) = \frac{dL_{ki}(\tau_j)}{\sum_{\ell\in\ncal_0}dL_{k\ell}(\tau_j)}\ind{i\neq 0}+\frac{1}{n}\ind{i=0,\; j\neq 0}$ otherwise. 

\begin{proposition}\label{prop:cash}
Consider a financial network defined by $(dx,dL)$ satisfying Assumption~\ref{ass:society}.  Let $\alpha,\beta,\gamma \in [0,1]$ be the recovery rates.  Consider the default time $\tau_j$ of bank $j\in\ncal$.
There exists a unique clear cash account $V_i(\tau_j^+)$ for bank $i\in\ncal_0$ such that $\tau_i > \tau_j$ and payments $P_i$ for bank $i\in\ncal_0$ such that $\tau_i = \tau_j$.
\end{proposition}

Now, let us consider the capital account $K$.  Utilizing the accounting rules provided in Section~\ref{sec:continuous-K}, the capital account at time $t$ for each bank can be described explicitly without need for a differential system.  Specifically, for bank $i$, the capital account at time $t$ (given default times $\tau$) is given by:
\begin{align}
\label{eq:K} K_i(t) &= \E_t[x_i(T)] + \sum_{j\in\ncal} \left[L_{ji}(T)\ind{\tau_j > t} + \left(L_{ji}(\tau_j) - a_{ji}(\tau_j)V_j(\tau_j)^- + \bar a_{ji}P_j\right)\ind{\tau_j \leq t}\right] - \sum_{j\in\ncal_0} L_{ij}(T).
\end{align}
That is, the capital account is the sum of the accounting value of bank $i$'s cash flows from nondefaulting institutions and the realized payments from defaulting institutions.
Notably, in contrast to the cash account, the capital account of bank $i$ jumps downward at a default of some counterparty of bank $i$.  This occurs due to two effects: (i) historical price accounting causes an immediate drop in the value of interbank assets and (ii) the payments made from a defaulting institution is always bounded from above by its total obligations.

\begin{remark}\label{rem:cadlag}
We wish to note that the capital account $K$ is c\`adl\`ag as the accounting of defaults occurs at the default time.  However, the cash account is left-continuous with right limits at the default times as the transfer of payments from defaulting firms would take a clearing time to process.  This natural structure of the dynamic Eisenberg-Noe framework solves one of the primary difficulties of the discrete-time multiple maturity setting presented in~\cite{KV16} in which the default of one firm can cause another to recover from default; with the time delay such a sequence cannot occur.
\end{remark}

To finalize this dynamic framework, we wish to introduce the default times. 
Briefly, as a general setup, a bank will default if its cash and capital accounts are too low. Specifically, as provided in the ISDA Master Agreement and discussed in Section~\ref{sec:fednow}, if bank $i$'s capital account $K_i$ drops below zero (i.e., when the bank's liabilities exceed its assets in value) then bank $i$'s shareholders will declare default. Alternatively, if bank $i$ has failed to recover -- within a grace period $\Delta \geq 0$ -- from delinquency (based on its cash account $V_i$) then the bank will be declared to be in default by its creditors. 
That is, given the grace period $\Delta \geq 0$, bank $i$ defaults at stopping time $\tau_i$ defined by
\[\tau_i = \inf\{t \in \bbt \; | \; \min\{K_i(t) \, , \, \sup_{s \in [t-\Delta,t]} V_i(s)\} \leq 0\}.\]
Note that, by our construction of the financial system, $V_i(0) > 0$ for every bank $i$, therefore $\sup_{s \in [t-\Delta,t]} V_i(s) > 0$ if $t \in [0,\Delta]$.
If the grace period $\Delta = 0$ then any delinquency time is also a default time.

\subsection{Construction of an Optimal Clearing Solution}

With the early default model presented above, we now wish to consider the existence of a clearing cash and capital account.  Notably, though locally this problem is monotonic, we are unable to conclude globally that a greatest and least solution exist due to the possibility that some institution benefits from another declaring an early default (e.g., due to insolvency rather than illiquidity).  Additionally, uniqueness does \emph{not} hold generally due to the possibility of different sized default cascades. These difficulties are highlighted within Example~\ref{ex:nonunique} below.

\begin{corollary}\label{cor:defaults}
Let $\bbt = [0,T]$ be a finite time period and let $(dx,dL): \bbt \to \bbr^{n+1} \times \bbr^{(n+1)\times(n+1)}_+$ define a dynamic financial network satisfying Assumption~\ref{ass:society} with grace period $\Delta \geq 0$.
Let $\alpha,\beta,\gamma \in [0,1]$ be the recovery rates.
There exists a solution of the clearing cash and capital accounts $(V,K)$ with associated relative exposure matrix $A$.
\end{corollary}

We wish to demonstrate this clearing solution by continuing our running example from the prior sections to consider a brief discussion of the impact of the grace period $\Delta$ on the outcome of the financial system. 
\begin{example}\label{ex:2bank-default}[Example~\ref{ex:2bank-V}]
Consider the $n = 2$ bank setting of Example~\ref{ex:2bank-V}. For simplicity, herein we fix the recovery rates $\alpha,\beta,\gamma = 0$ so that no assets are recovered during default as taken in, e.g., \cite{GK10}. Following the optimal clearing solution construction as in the proof of Corollary~\ref{cor:defaults}, there are two key grace periods $\Delta = \frac{3}{8}(1 - \frac{1}{5}[\frac{3}{4}]^{1/3}) \approx 0.3069$ and $\Delta = 0.375$. Recall that, up until a default, this system will evolve exactly as in Examples~\ref{ex:2bank-K} and~\ref{ex:2bank-V}.
\begin{itemize}
\item If $\Delta \in (0.375 , \infty)$ then there are no defaults within this system as bank 1 has enough time to recover from distress. As such the cash accounts follow exactly as in Example~\ref{ex:2bank-V} and the capital accounts are constant over time: $K_1(t) = 1.1$, $K_2(t) = 0.1$, and $K_0(t) = 3$ for all times $t$ as in Example~\ref{ex:2bank-K}.
\item If $\Delta \in (0.3069 , 0.375]$ then bank 1 defaults at time $\tau = 0.35+\Delta$, but bank 2 recovers enough to remain solvent. Setting the capital to 0 once the firm defaults, we find that the clearing capital accounts follow the piecewise constant forms: $K_1(t) = 1.1 \times \ind{t < \tau}$, $K_2(t) = 0.1 \times \ind{t < \tau} + [-1.9 + 4\min\{\tau,0.5\} + a_{12}(\tau)V_1(\tau)]\ind{t \geq \tau}$, and $K_0(t) = 3\times\ind{t < \tau} + [1.7 + 2\Delta + a_{10}(\tau) V_1(\tau)]\ind{t \geq \tau}$. 
The clearing cash accounts for bank 2 and society are shifted downward by a change of slope so that $V_2(t)$ for $t > \tau$ has a slope of $-7$ and $V_0(t)$ has a slope of $1$.
\item If $\Delta \in [0 , 0.3069]$ then both banks 1 and 2 default at time $\tau = 0.35+\Delta$. Following the same convention as above that capital is 0 after default, $K_1(t) = 1.1 \times \ind{t < \tau}$ and $K_2(t) = 0.1 \times \ind{t < \tau}$. The capital for society is given by the piecewise constant function $K_0(t) = 3\times\ind{t < \tau} + [3\tau + a_{10}(\tau)V_1(\tau)]\ind{t \geq \tau}$. The clearing cash account for society is updated after $\tau$ so that $V_0(t) = V_0(\tau)$ for all $t > \tau$.
\end{itemize}
In Figure~\ref{fig:2bank-default}, we plot the terminal capital levels for banks 1 and 2 as well as society as a function of the grace period $\Delta$. Note that the capital account of society accelerates after $\Delta = 0.15$ as this corresponds to time $\tau = 0.5$ when bank 1 is able to start repaying its missed obligations. Furthermore, we wish to mention that these terminal capital accounts are equivalent to the terminal cash accounts under the same scenarios as there are no more future obligations to distinguish the types of holdings.
\begin{figure}[h]
\centering
\includegraphics[width=0.4\textwidth]{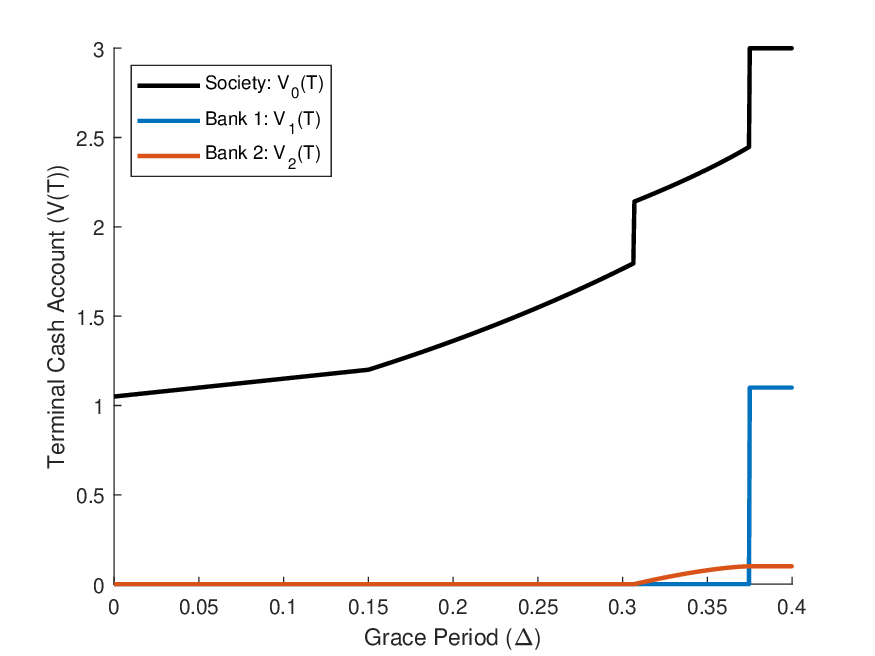}
\caption{Example~\ref{ex:2bank-default}: Terminal capital account as a function of the grace period with system transition points highlighted with dotted black lines.}
\label{fig:2bank-default}
\end{figure}
\end{example}

\begin{assumption}\label{ass:max}
For the remainder of this work, we will solely focus on the constructed clearing solution presented within the proof of Corollary~\ref{cor:defaults} as it -- locally -- results in the smallest possible default cascade at all default times.
That is, conceptually, we follow a fictitious default approach (similar to that taken in~\cite{EN01,RV13}) so that, forward in time, we assume there are no new defaults and verify if that holds; if verification fails then a default time is triggered and the procedure presented in Proposition~\ref{prop:cash} is utilized.
\end{assumption}

We wish to conclude our discussion of the optimal clearing solution by providing an example that elaborates on the local -- but \emph{not} global -- optimality of the constructed equilibrium. Specifically, we construct a system with a second clearing solution that is not dominated by the ``optimal'' one constructed through the algorithm provided in the proof of Corollary~\ref{cor:defaults}.
\begin{example}\label{ex:nonunique}
Consider a dynamic $n = 3$ bank system with society node over time $[0,1]$. For this example we consider recovery rates $\alpha = 0.75$, $\beta = 0.5$, and $\gamma = 0$ on liabilities:
\[L(t) = \left(\begin{array}{cccc} t & 0 & f_{(a,b)}(t/0.7) & f_{(a,b)}((t-0.7)/0.3) \\
    2t & f_{(a,b)}((t-0.7)/0.3) & 0 & 0 \\
    t & f_{(a,b)}((t-0.7)/0.3)/2 & f_{(a,b)}((t-0.7)/0.3)/2 & 0\end{array}\right)\]
where $f_{(a,b)}: \bbr \to [0,1]$ is such that $t \in [0,1] \mapsto f_{(a,b)}(t)$ is the regularized incomplete Beta function with parameters $a > 2, b > 2$ while $f(t) = 0$ for $t < 0$ and $f(t) = 1$ for $t > 1$.
Further, consider the external assets to be given by $x(t) = (1.7 , 1.7 , 1.01)^\T$ for all times $t \in [0,1]$. Finally, we set the grace period $\Delta = 0$ to simplify the discussion. We now wish to consider two clearing solutions:
\begin{itemize}
\item Following algorithm within the proof of Corollary~\ref{cor:defaults}, bank 1 will default at time $0.7$ from its cash account $V_1(0.7) = 0$. This default will cause contagion in which bank 3 will fail as its capital account will drop from $0.1$ down to $-0.4131$. Consequently, the capital for bank 2 drops to $0.7894$ and survives throughout the period. The external society node will end the period with $2.969$.
\item Consider, now the counterfactual setting in which banks 1 and 2 are both in default at the initial time $t = 0$. Following the clearing equations, we then find capital accounts $K(t) \approx (2.5250 \, , \, -0.2917 \, , \, -0.2917 \, , \, 0.0267)^\T$ for all times $t$. Notably, this is a fixed point and no other defaults would occur throughout the time period.
\end{itemize}
Notably, bank 3 is better off under the second setting in which banks 1 and 2 both default at initialization compared to the ``locally optimal'' algorithm provided in the proof of Corollary~\ref{cor:defaults}. Bank 2 and society have the contrasting view on these solutions and find the algorithm optimal.
\end{example}

\subsection{Illiquidity and Insolvency Conditions}\label{sec:default-rule}
Within this section we wish to, first, provide a mathematical description of the cause of a default, i.e., due to illiquidity (the cash account $V$) or insolvency (the capital account $K$). Second, we want to classify the first \emph{distress} event as being due to delinquency or insolvency due to the dynamics of the external assets $x$.

\begin{definition}\label{defn:defaults}
We classify defaults into three cases.  Let $i \in \ncal$ be an arbitrary bank.  Fix $\omega\in\Omega$ such that $\tau_i(\omega) \leq T$, i.e., such that the default of bank $i$ occurs during the time period of interest.  This default is classified as
\begin{enumerate}
\item an \textbf{\emph{illiquidity}} if bank $i$ is unable to recover from a delinquency within the grace period $\Delta$, i.e., $\lim_{t \nearrow \tau_i(\omega)} \sup_{s \in [t-\Delta,t]} V_i(s,\omega) = 0$; 
\item an \textbf{\emph{insolvency}} if bank $i$'s shareholders declare bankruptcy due to to an anticipated drop in capital, i.e., $\lim_{t \nearrow \tau_i(\omega)} K_i(t,\omega) = 0$; and
\item a \textbf{\emph{cascaded default}} if bank $i$ defaults solely due to the failure of other institutions, i.e., $\lim_{t \nearrow \tau_i(\omega)} \min\{K_i(t,\omega) \, , \, \sup_{s \in [t-\Delta,t]} V_i(s,\omega)\} > 0$. 
\end{enumerate}
\end{definition}

Before discussing the first distress time (i.e., the time of the first default or delinquency), we wish to provide some quick remarks on how each of these types of default can be prevented by, e.g., a central bank. 
\begin{remark}\label{rem:defaults} \hspace{1mm}
\begin{itemize}
\item An illiquidity occurs when the default is entirely triggered by the cash account, i.e., $\tau_i(\omega) = \inf\{t \in \bbt \; | \; \sup_{s \in [t-\Delta,t]} V_i(s,\omega) \leq 0\}$. As such, an illiquid bank can be rescued by providing a loan with sufficiently low interest rate (so as to not reduce the capital account too much). That is, illiquidities can be prevented by a lender of last resort.
\item In contrast, an insolvency occurs when the default is entirely triggered by the capital account, i.e., $\tau_i(\omega) = \inf\{t \in \bbt \; | \; K_i(t,\omega) \leq 0\}$. As such, an insolvency requires a bailout to raise the equity of the firm.
\item We consider cascaded defaults separate from the other two cases as these defaults can be prevented solely by providing a necessary loan or bailout for some other firm(s) depending on the type of initial default(s) occurring.  Due to the characteristics of the cash account $V_i$ at a default time, a cascaded default is a type of insolvency; that is, though a cascade can be triggered by either an illiquidity or insolvency, all subsequent defaults at that time must be due to the immediate impacts to the capital account.
\item If $\alpha = \beta=\gamma=1$, i.e.\ if there were no bankruptcy costs, then an insolvency cannot trigger a default cascade as the historical price accounting becomes the realized value.  However, in this situation, a liquidity event can still trigger a cascade.
\end{itemize}
\end{remark}

By construction, the first distress to occur in the financial system must be either a delinquency or an insolvency.  The following proposition will provide a simple framework that guarantees the type of this first default based on the dynamics of the obliged cash flows.  In particular, this first distress time is a delinquency (i.e., has a cash account $V(t)<0$) if the system is driven by submartingales and an insolvency (i.e., has capital account $K(t) = 0$) if the system is driven by supermartingales. As a stress scenario can generally be described by a supermartingale, this result indicates that the capital level of each bank would be the key element for regulators to watch in the midst of a financial crisis. However, in contrast, when the market is trending upward (i.e., a submartingale), then a systemic event can still be triggered via the delinquency of a firm.
\begin{proposition}\label{prop:defaults}
Consider the setting of Corollary~\ref{cor:defaults} with fixed grace period $\Delta \geq 0$.  If the stochastic process $\{x(t)+L(t)^\T\vec{1}-L(t)\vec{1} \; | \; t \in \bbt\}$ is a:
\begin{enumerate}
\item \emph{supermartingale} then the first distress is an insolvency; 
\item \emph{submartingale} then the first distress is a delinquency;  
\item \emph{martingale} then the cash account and capital account coincide exactly up to the first possible distress time (i.e., $K(t) = V(t)$ for any time $t$ where both are non-negative). 
\end{enumerate}
\end{proposition}
\begin{proof}
These results follow trivially by noting that, prior to the first distress, $K(t) = \E_t[x(T) + L(T)^\T\vec{1} - L(T)\vec{1}]$ and $V(t) \leq x(t) + L(t)^\T\vec{1} - L(t)\vec{1}$ with equality if $V(t) \geq 0$.
\end{proof}

\section{Numerical Examples}\label{sec:numerics}

Within this section we wish to consider two simple numerical examples to demonstrate certain properties of the above model.  First, in Example~\ref{ex:sample}, we consider a single sample path of the clearing cash and capital accounts so as to visualize the effects of early defaults. In particular, the jumps in both the capital account (downward) and the cash account (upward) at default times are clearly demonstrated. As such, the default contagion is clear along this sample path.  Then, in Example~\ref{ex:defaults}, we study the impact of the grace period $\Delta$ on the health of the financial system as measured by the default times of the different institutions in the system. Notably, the system dynamics endogenously recover non-monotonic outcomes for the health of the financial system as the grace period changes.
\begin{assumption}\label{ass:gbm}
For simplicity, throughout this section, let the external assets $x$ follow a discrete approximation of geometric Brownian motions without drift, i.e.,
\[dx_i(t) =  x_i(t) \sigma_i dB_i(t) \]
for constants $\sigma \in \bbr^{n+1}_+$ with (correlated) Brownian motions $B_1,...,B_{n+1}$.
\end{assumption}

\begin{example}\label{ex:sample}
Consider a financial system with three banks, each with an obligation to an external societal node.  Consider the time interval $\bbt = [0,1]$. Let these banks have obligations over time
$L_{ij}(t) = t \ind{i \neq j}$
for $i \in \ncal$ and $j \in \ncal_0$, i.e., a symmetric and completely connected system of obligations.  This assumption of constant relative liabilities is shown in Appendix~\ref{sec:discussion-EN} to provide the Eisenberg-Noe clearing wealths at the terminal time when neglecting defaults; herein we will see that with early defaults this no longer holds.  We will take the external assets to follow Assumption~\ref{ass:gbm} with initial value $x_i(0) = 2$, volatility $\sigma_i = 1$, and pairwise correlations of the Brownian motions $\rho_{ij} = \ind{i = j} + \frac{1}{2}\ind{i \neq j}$ for any pair of banks $i,j \in \ncal_0$.  As the purpose of this example is to demonstrate the effects of defaults on a single sample path of the cash and capital accounts, we will take the simplified system without grace period $\Delta = 0$, i.e., a bank defaults if its cash or capital account reaches $0$.  

First, recall that external assets $x$ are constructed to be a martingale in Assumption~\ref{ass:gbm}. Therefore, by the symmetry of the interbank assets but with the strictly positive external liabilities, the provided setting is such that $\{x(t) + L(t)^\T\vec{1} - L(t)\vec{1} \; | \; t \in \bbt\}$ is a supermartingale. Therefore, as provided in Proposition~\ref{prop:defaults}, the first default will occur due to an insolvency (i.e., $\lim_{t \nearrow \tau_{[1]}} K_{[1]}(t) = 0$). 
(In fact, as seen in Figure~\ref{fig:sample}, for this sample path the choice of grace period $\Delta$ is irrelevant as a delinquency does not occur.)
At default times, we take the recovery rates as $\alpha = \frac{1}{2}$, $\beta = \frac{1}{4}$, and $\gamma = \frac{1}{5}$.
A single sample path of this clearing system is provided in Figure~\ref{fig:sample}.  As can be seen in this sample path, bank $3$ has an insolvency event at time $\tau_3 \approx 0.224$ triggering a bank $2$ cascaded default.  The impact to bank $1$ is clear, and though not a direct cascaded default, the impact to the capital of bank $1$ contributes to its early default as well.  The cash accounts of society and bank $1$, however, improve in the immediate aftermath of the defaults at time $t \approx 0.224$; society's cash account clearly slows its rate after the defaults so that though the cash account improves in the short-term, it is harmed in the long-term.
\begin{figure}[t]
\centering
\begin{subfigure}[t]{0.45\linewidth}
\centering
\includegraphics[width=\linewidth]{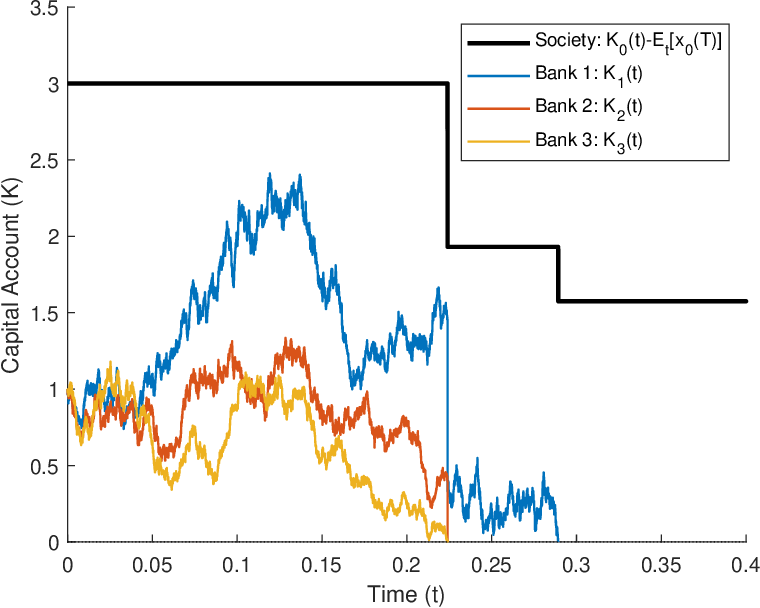}
\caption{Example~\ref{ex:sample}: Clearing capital accounts over time.}
\label{fig:sample-K}
\end{subfigure}
~
\begin{subfigure}[t]{0.45\linewidth}
\centering
\includegraphics[width=\linewidth]{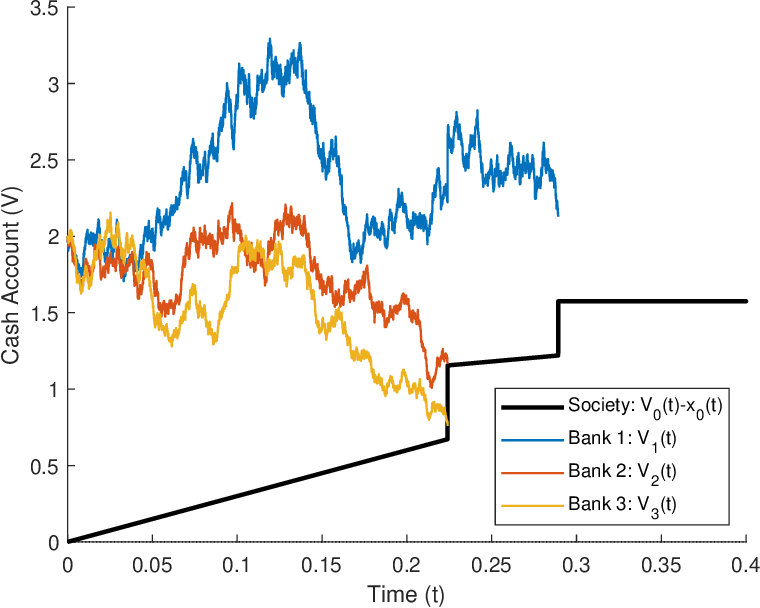}
\caption{Example~\ref{ex:sample}: Clearing cash accounts over time.}
\label{fig:sample-V}
\end{subfigure}
\caption{Example~\ref{ex:sample}: A single sample path of the clearing cash and capital accounts to demonstrate the impacts of early defaults on the wealth of other institutions.}
\label{fig:sample}
\end{figure}
\end{example}

\begin{example}\label{ex:defaults}
Consider again a three bank plus societal node system; in comparison to the network of Example~\ref{ex:sample}, herein we consider minimal societal obligations so as to more clearly visualize default contagion within the system.
Consider the time interval $\bbt = [0,1]$ and let banks have time-varying obligations: $L_{1j}(t) = t^2\ind{j \neq 1}$ for $j \in \{2,3\}$ and $L_{10}(t) = 10^{-3}t^2$; $L_{2j}(t) = t\ind{j \neq 2}$ for $j \in \{1,3\}$ and $L_{20}(t) = 10^{-3}t$; and $L_{3j}(t) = 2t(1-0.5t)\ind{j \neq 3}$ for $j \in \{1,2\}$ and $L_{30}(t) = 2\times10^{-3}t(1-0.5t)$. 
We assume, additionally, that the external assets follow Assumption~\ref{ass:gbm} with initial value $x_i(0) = 1$, volatility $\sigma_i = 1$, and pairwise correlations of the Brownian motions $\rho_{ij} = \ind{i = j} + \frac{1}{2}\ind{i \neq j}$ for any pair of banks $i,j \in \ncal_0$.  
At default times, we take the recovery rates as $\alpha = \frac{1}{2}$, $\beta = \frac{1}{4}$, and $\gamma = \frac{1}{5}$ for demonstration purposes.
The purpose of this example is to demonstrate the effects of the grace period $\Delta$ on system health as measured by the default times of each firm.  We investigate these impacts by repeatedly studying the same single sample path under varying grace periods $\Delta$ so that the effects of changing the default rule become evident. 

The results of changing the grace period are displayed in Figure~\ref{fig:defaults}.  First, in Figures~\ref{fig:sample-VK-D=0}-\ref{fig:sample-VK-D=10}, we consider three sample paths directly with grace period $\Delta = 0,~0.05,~0.1$ respectively. Notably, when $\Delta = 0$, bank 3 has an illiquidity default at $\tau_3 \approx 0.143$; this illiquidity causes a default cascade as the capital account for both banks 1 and 2 are immediately driven below 0.  
In comparison, when $\Delta = 0.05$, bank 3 is able to survive beyond a few early periods of delinquency, but it still defaults at $\tau_3 \approx 0.501$.  Under this setting with a grace period, both banks 1 and 2 have sufficient capital accumulated to withstand the immediate drop in capital; however, these lower capital values work contribute to the insolvency of bank 1 at $\tau_1 \approx 0.766$ and, as a consequence, the cascaded default of bank 2 at $\tau_2 = \tau_1$.  
Intriguingly, when we increase the grace period further to $\Delta = 0.1$, though bank 3 survives longer ($\tau_3 \approx 0.620$), banks 1 and 2 default earlier than in the $\Delta = 0.05$ case (with $\tau_1 = \tau_2 \approx 0.734$).  This indicates that there are strong nontrivial relations between the health of the financial system and the grace period $\Delta$.
This is even more clearly demonstrated in Figure~\ref{fig:tau} where the default times are directly plotted.  Though bank 3 is uniformly delaying its default time as $\Delta$ increases, banks 1 and 2 have default times that fluctuate in a chaotic manner.  Of particular interest, when the grace period $\Delta \in (0.096,0.099)$, bank 2 is able to survived and did not default at all even with the default of bank 1; however at $\Delta \geq 0.099$, bank 2 again is stressed in such a way that it defaults.
The dynamics in the default times for banks 1 and 2 are significantly driven by the actual cash account of bank 3 at its default time. Specifically, there are two countervailing forces to the health of banks 1 and 2 based on the grace period for bank 3: (i) a longer grace period delays the time when these (otherwise solvent) institutions need to write down their capital accounts $K$ and (ii) a longer grace period can result in the bank 3 having even larger losses to transmit through the financial system. In particular, due to the random evolution of the cash account of bank 3, these two countervailing effects fluctuate in importance.
As such, we are left to conclude that the grace period $\Delta$ can have significant systemic risk implications, though quantifying these effects is highly nontrivial and requires careful simulation.
\begin{figure}[t]
\centering
\begin{subfigure}[t]{0.45\linewidth}
\centering
\includegraphics[width=\linewidth]{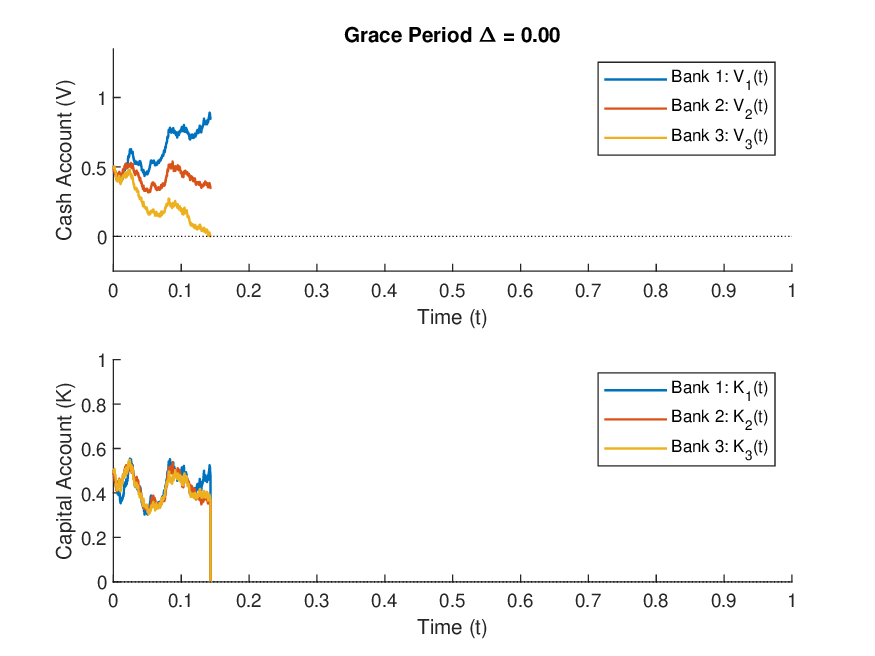}
\caption{Example~\ref{ex:defaults}: Cash and capital accounts for $\Delta = 0$.}
\label{fig:sample-VK-D=0}
\end{subfigure}
~
\begin{subfigure}[t]{0.45\linewidth}
\centering
\includegraphics[width=\linewidth]{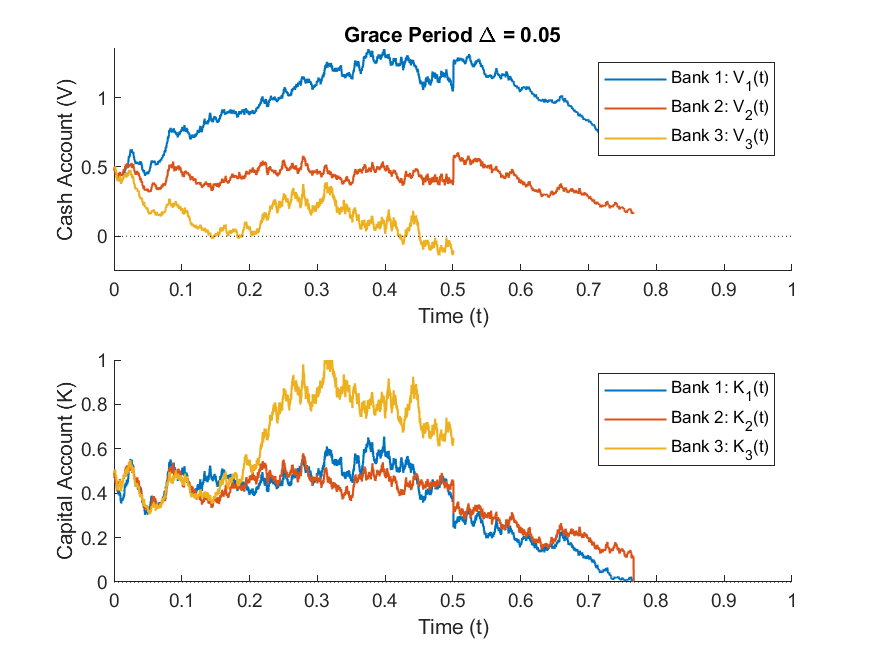}
\caption{Example~\ref{ex:defaults}: Cash and capital accounts for $\Delta = 0.05$.}
\label{fig:sample-VK-D=5}
\end{subfigure}
~\\
\begin{subfigure}[t]{0.45\linewidth}
\centering
\includegraphics[width=\linewidth]{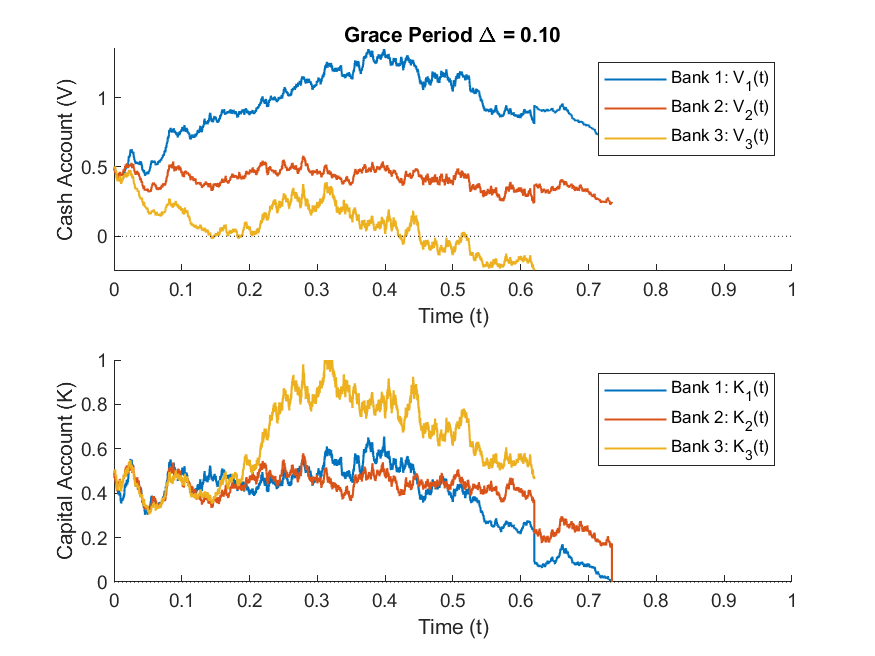}
\caption{Example~\ref{ex:defaults}: Cash and capital accounts for $\Delta = 0.1$.}
\label{fig:sample-VK-D=10}
\end{subfigure}
~
\begin{subfigure}[t]{0.45\linewidth}
\centering
\includegraphics[width=\linewidth]{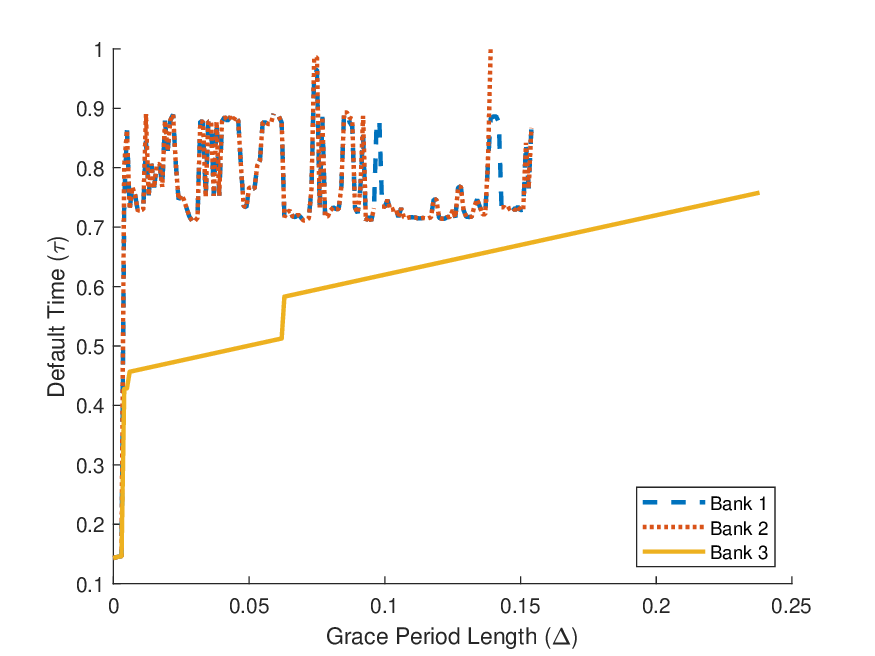}
\caption{Example~\ref{ex:defaults}: Default times as a function of the grace period.}
\label{fig:tau}
\end{subfigure}
\caption{Example~\ref{ex:defaults}: Depictions of the health of the financial system under a single sample path with varying a grace period.}
\label{fig:defaults}
\end{figure}
\end{example}

\section{Conclusion}\label{sec:conclusion}
In this paper we considered an extension of the financial contagion model of \cite{EN01,RV13} to allow for assets and obligations to be dynamic in time.  We presented this model in a continuous-time framework and provide conditions for existence and uniqueness of the clearing solutions under deterministic and stochastic settings.  In this dynamic setting, we introduced mathematical definitions for delinquency as well as illiquidity and insolvency defaults. Notably, the distinctions between delinquency and defaults (as well as between default types) can only occur due to the time dynamics as introduced within this work; in the static Eisenberg-Noe system, the capital and cash accounts are identical and no grace period can be included to permit recovery from a missed payment.

Five clear extensions of this model are apparent to us, and which we foresee creating further divergence between static and dynamic models.  
The first extension is to update the accounting valuation of interbank obligations from the backwards-looking historical price accounting procedure used here to a forwards-looking mark-to-market accounting rule.  Such a network valuation adjustment procedure has been studied in the single period setting in~\cite{barucca2016valuation,BF18comonotonic} and on a discrete tree in~\cite{feinstein2022endogenous}.
The second extension is the inclusion of illiquid assets and fire sales.  In the static models, e.g., \cite{CFS05,AFM16,feinstein2015illiquid}, there is no first mover advantage to liquidating assets as all firms receive the same price.  However, in a dynamic model there may be advantage to liquidating early in order to receive a higher price, but which may precipitate a larger fire sale amongst the other firms as is undertaken in a pure price-mediate contagion setting in \cite{feinstein2020capital}.  
The third extension is the inclusion of contingent payments and credit default swaps.  In the static setting this has recently been considered by \cite{banerjee2017insurance,SSB16,SSB16b}.  By considering the network dynamics to be dependent on the history of clearing cash accounts, many of the difficulties reported in the static works are likely to be resolved naturally; we refer to \cite{banerjee2017insurance} which provides an initial discussion of this extension.  
The fourth extension is the inclusion of margin requirements and collateralized obligations.  This has been explored in the static setting by \cite{ghamami2021collateralized}.
The final extension, for which we believe the proposed dynamic model will be especially useful, is in considering strategic or dynamic actions by the market participants, e.g., incorporating strategic decisions on rolling forward of debt and early defaults akin to that studied in~\cite{cossin2007credit}.


\bibliographystyle{apalike}
\bibliography{bibtex2}

\newpage
\appendix

\section{Proof of Results in Sections~\ref{sec:continuous} and ~\ref{sec:default}}\label{sec:continuous-proof}

\subsection{Proof of Proposition~\ref{prop:A}}\label{sec:proof-propA}

\begin{proof}
\begin{enumerate}
\item Consider firm $i \in \ncal$. By assumption we have that $a_{ij}(t)$ for $t \nearrow \tau$ solves the first order differential equation:
\begin{equation*}
\frac{da_{ij}(t)}{dt}+\frac{\sum_{k \in \ncal_0} dL_{ik}(t)/dt}{V_i(t)^-}a_{ij}(t)=\frac{dL_{ij}(t)/dt}{V_i(t)^-}.
\end{equation*}
For sake of simplicity, let this differential equation start at time $0$ with $V_i(0) < 0$ and some initial value $a_{ij}(0)$ satisfying the conditions of Remark~\ref{rem:initialization}.  Then this differential equation can be solved via the integrating factor $\nu(t) := \int_0^t \frac{\sum_{k \in \ncal_0} dL_{ik}(s)}{V_i(s)^-}ds $.
Thus for $t \nearrow \tau$ it follow that
\begin{equation*}
a_{ij}(t)= e^{-\nu(t)}\left[\int_0^t e^{\nu(s)} \frac{dL_{ij}(s)}{V_i(s)^-} + a_{ij}(0)\right].
\end{equation*}
Therefore, utilizing L'H\^ospital's rule,
\begin{align*}
\lim_{t \nearrow \tau} a_{ij}(t) &= \lim_{t \nearrow \tau} e^{-\nu(t)}\left[\int_0^t e^{\nu(s)} \frac{dL_{ij}(s)}{V_i(s)^-} + a_{ij}(0)\right]\\
&= \lim_{t \nearrow \tau} \frac{e^{\nu(t)}\frac{dL_{ij}(t)}{V_i(t)^-}}{e^{\nu(t)} \frac{d}{dt} \nu(t)} = \lim_{t \nearrow \tau} \frac{dL_{ij}(t)/V_i(t)^-}{\sum_{k \in \ncal_0} dL_{ik}(t)/V_i(t)^-}
= \frac{dL_{ij}(\tau)}{\sum_{k \in \ncal_0} dL_{ik}(\tau)}.
\end{align*}

\item First, if $V_i(t) \geq 0$ then by construction (and the above result) it follows that $a_{ij}(t) = \frac{dL_{ij}(t)}{\sum_{k \in \ncal_0} dL_{ik}(t)} \geq 0$ for any $i,j \in \ncal_0$ and $a_{i0}(t) \geq \delta$ by this construction.  Consider now the case for $V_i(t) < 0$ and assume $a_{ij}(t) < 0$.  Let $\tau = \sup\{s \leq t \; | \; V_i(s) = 0\}$.  Since $a_{ij}(\tau) \in [0,1]$ by construction and the relative exposures are continuous, this implies there exists some time $s \in [\tau,t)$ such that $a_{ij}(s) = 0$.  By the definition of the relative exposures, this must follow that $da_{ij}(s) \geq 0$ for any time $a_{ij}(s) \leq 0$ (with $da_{ij}(s) > 0$ if $a_{ij}(s) < 0$), thus $a_{ij}(t) < 0$ can never be reached.  Further, assume $a_{i0}(t) < \delta$.  By Assumption~\ref{ass:society}, if $a_{i0}(s) \leq \frac{dL_{i0}(s)}{\sum_{k \in \ncal_0} dL_{ik}(s)}$ then $da_{i0}(s) \geq 0$.  In particular, if $a_{i0}(s) \leq \delta$ then $da_{i0}(s) \geq 0$ (with $da_{i0}(s) > 0$ if $a_{i0}(s) < \delta$).  Thus, by the same contradiction found in the case for $j \in \ncal$, we are able to bound $a_{i0}(t) \geq \delta$.

\item First, if $i = 0$ then $\sum_{j \in \ncal_0} a_{0j}(t) = 1$ by property that $a_{0j}(t) = \frac{1}{n}\ind{j \neq 0}$ for all times $t$.  Now consider $i \in \ncal$, if $V_i(t) \geq 0$ then by construction (and the above result) it follows that $\sum_{j \in \ncal_0} a_{ij}(t) = \sum_{j \in \ncal_0} \frac{dL_{ij}(t)}{\sum_{k \in \ncal_0} dL_{ik}(t)} = 1$.
Consider now the case for $V_i(t) < 0$ and let $\tau = \sup\{s \leq t \; | \; V_i(s) = 0\}$.  Since $\sum_{j \in \ncal_0} a_{ij}(\tau) = 1$ by prior results, we will assume that $\sum_{j \in \ncal_0} a_{ij}(t) = 1$ to deduce
\begin{align*}
\sum_{j \in \ncal_0} da_{ij}(t) &= \sum_{j \in \ncal_0} \frac{dL_{ij}(t) - a_{ij}(t) \sum_{k \in \ncal_0} dL_{ik}(t)}{V_i(t)^-}\\
&= \frac{\sum_{j \in \ncal_0} dL_{ij}(t)}{V_i(t)^-} - \frac{\left(\sum_{j \in \ncal_0} a_{ij}(t)\right) \left(\sum_{k \in \ncal_0} dL_{ik}(t)\right)}{V_i(t)^-} = 0.
\end{align*}
Therefore based on the initial conditions, $a_{ij}(t)$ must evolve so that it maintains the constant row sum of $1$.
\end{enumerate}
\end{proof}

\subsection{Proof of Theorem~\ref{thm:continuous}}\label{sec:proof-continuous}
\begin{proof}
Recall that the initial values to the Eisenberg-Noe differential system are $V_i(0) > 0$ and $a_{ij}(0) = \frac{dL_{ij}(0)}{\sum_{k \in \ncal_0} dL_{ik}(0)}\ind{i \neq 0} + \frac{1}{n}\ind{i = 0,\; j \neq 0}$ for all banks $i,j \in \ncal_0$.
For ease of notation, consider $\tau_0 := 0$ and recursively define the stopping times
\[\tau_{m+1} := \inf\{t \in (\tau_m,T] \; | \; V_i(\tau_m)V_i(t) < 0 \text{ or } [V_i(\tau_m) = 0, \; dV_i(\tau_m)V_i(t) < 0]\}.\]
That is, $\tau_m \in \bbt$ is the time of the $m$th change in $\Lambda(V)$.  Without loss of generality, we will assume that $\tau_m = T$ if the infimum is taken over an empty set.  We note that the times $\tau_m$ are all stopping times with respect to the natural filtration.

With these times, note that in particular, on the interval $(\tau_m,\tau_{m+1}]$ we can consider the set of delinquent banks to be constant; to simplify, and slightly abuse, notation we can thus consider a constant matrix of delinquent firms $\Lambda(\tau_m)$ in the interval $(\tau_m,\tau_{m+1}]$.  We will now construct the unique solution forward in time over these time intervals, noting that we update $\Lambda$ and $\tau_{m+1}$ once the next event is found.

First, by construction, on $[0,\tau_1]$ there exists a unique solution to the differential system provided by $V(t) = V(0) + x(t) + L(t)^\T\vec{1} - L(t)\vec{1}$ and $a_{ij}(t) = \frac{dL_{ij}(t)}{\sum_{k \in \ncal_0} dL_{ik}(t)} \ind{i \neq 0} + \frac{1}{n}\ind{i = 0,\; j \neq 0}$ for all banks $i,j \in \ncal_0$.  Assume there exists a solution in the time interval $[0,\tau_m]$ for $\tau_m < T$.  Now we want to prove the existence and uniqueness for the clearing cash accounts and relative exposures on the interval $(\tau_m,\tau_{m+1}]$. 
Recall the construction of $dV(t)$:
\begin{align*}
    dV(t) = \left[ I -A(t)^\T \Gl(\tau_m) \right]^{-1}\left(dx(t) - [I - A(t)^\T]\dot{L}(t) dt \vec{1}\right).
\end{align*} 
Furthermore, assuming $A(t)$ exists, the network multiplier $\left[ I -A(t)^\T \Gl(\tau_m) \right]^{-1}$ is bounded:
\begin{align*}
\|(I - A(t)^\T\Lambda(\tau_m))^{-1}\|_1^{op} &\leq \sum_{k = 0}^\infty \|[A(t)^\T\Lambda(\tau_m)]^k\|_1^{op} \\
&\leq 1 + \sum_{k = 1}^\infty (1-\delta)^{k-1} = 1 + \frac{1}{\delta}.
\end{align*}
It therefore follows that, for a fixed $\Gl$, $dV(t)$ is locally Lipschitz in $A(t)$.

Now we wish to consider the differential form for the relative exposures matrix \eqref{eq:continuous-A}.  First, if $\Lambda_{ii}(\tau_m) = 0$ (and in particular, $\Lambda_{00}(\tau_m) = 0$ by assumption of the societal node) then $a_{ij}(t) = \frac{dL_{ij}(t)}{\sum_{k \in \ncal_0} dL_{ik}(t)}\ind{i \neq 0} + \frac{1}{n}\ind{i = 0, \; j \neq 0}$ is the unique solution for any firm $j \in \ncal_0$ over all times $t \in (\tau_m,\tau_{m+1}]$.  In particular, this is independent of the evolution of the cash accounts $V$, so we need only consider the joint differential equation between the cash accounts $V$ and the relative exposures $a_{ij}$ where bank $i$ is in delinquency between times $\tau_m$ and $\tau_{m+1}$, i.e., $\Lambda_{ii}(\tau_m) = 1$.  Consider bank $i \in \ncal$ with $\Lambda_{ii}(\tau_m) = 1$.  Therefore by construction $V_i(t) < 0$ for all $t \in (\tau_m,\tau_{m+1})$.  If $V_i(\tau_{m+1}) = 0$ then from Proposition~\ref{prop:A}, it already follows that the unique solution $a_{ij}(\tau_{m+1}) = \frac{dL_{ij}(\tau_{m+1})}{\sum_{k \in \ncal_0} dL_{ik}(\tau_{m+1})}$ must hold, otherwise we can extend $V_i(t) < 0$ for $t \in (\tau_m,\tau_{m+1}]$.  The differential form for all relative exposures \eqref{eq:continuous-A} on the interval $(\tau_m,\tau_{m+1}]$ is provided by $da_{ij}(t) = \frac{dL_{ij}(t) - a_{ij}(t)\sum_{k \in \ncal_0}dL_{ik}(t)}{V_i(t)^-}$.  By construction $(a_{ij},V_i) \in [0,1] \times -\bbr_{++} \mapsto \frac{\dot{L}_{ij}(t) - a_{ij} \sum_{k \in \ncal_0} \dot{L}_{ik}(t)}{-V_i}$ is locally Lipschitz. 

Combining our results for the joint differential system for the clearing cash accounts $V$ from \eqref{eq:continuous-V} and relative exposures $A$ from \eqref{eq:continuous-A}, we find that this system satisfies a joint  local Lipschitz property on the interval $(\tau_m,\tau_{m+1}]$.  Therefore, there exists some $\epsilon \in \lcal_{T}^{\infty}(\bbr_{++})$ (such that $\tau_m + \epsilon$ is a stopping time) for which a solution for $(V,A): [\tau_m,\tau_m + \epsilon] \to \bbr^{n+1} \times \bba$ exists and is unique.  Using the same logic with local properties, we can continue our unique solution sequentially.  This can be continued until the stopping time $\tau_{m+1}$ is reached (found along the path of $(V,A)$ as a stopping time) or this process reaches some maximal time $T^* < \tau_{m+1}$ for which a unique solution exists on the time interval $[\tau_m,T^*)$.  We note that any solution $V(t)$ must, almost surely, exist in the (almost surely) compact space
\begin{align*}
&\Big[V(\tau_m) - \left(I + \frac{1+\delta}{\delta} \mathbbm{1}\right)\left(\int_{\tau_m}^t dx(s)^- + \left(L(t) - L(\tau_m)\right)\vec{1}\right) , \\
&\quad V(\tau_m) + x(t)-x(\tau_m) + (L(t)-L(\tau_m))^\T\vec{1} - (L(t)-L(\tau_m))\vec{1}\Big]
\subseteq \lcal_t^2(\bbr^{n+1})
\end{align*}
where $\mathbbm{1} = \{1\}^{(n+1) \times (n+1)}$.  The lower bound is determined to be based on the bounding of the Leontief inverse; the upper bound follows from the continuous-time version of \eqref{eq:discrete-Vdt-expand}, i.e.,
\[V(t) = V(0) + x(t) + L(t)^\T\vec{1} - L(t)\vec{1} - A(t)^\T V(t)^-.\]
Additionally, $a_{ij}(t)$ almost surely exists in the compact neighborhood $[0,1]$ by definition.
Therefore $(V(T^*),A(T^*)) = \lim_{t \nearrow T^*} (V(t),A(t))$ exists by continuity of the solutions and compactness of the range space.
Thus we can continue the differential equation from time $T^*$ with values $(x(T^*),V(T^*),A(T^*))$ which contradicts the nature that $T^*$ is the maximal time.  Notably, if $V_i(T^*) = 0$ for some bank $i$ then it is imperative to check if $\tau_{m+1} = T^*$ to update the set of delinquent banks $\Lambda$.

Therefore, by induction, there exists a unique solution $(V,A)$ to \eqref{eq:continuous-V} and \eqref{eq:continuous-A} on the domain $[0,\tau_m]$ for any index $m \in \bbn$.  In particular this holds up to $\tau^* = \sup_{m \in \bbn} \tau_m$.  
If $\tau^* \geq T$ then the proof is complete.  If $\tau^* < T$, then by the same argument as above we can find $(V(\tau^*),A(\tau^*))$ as we can bound both the cash accounts and relative exposures into an almost surely compact neighborhood (and a subset of $\lcal_{\tau^*}^2(\bbr^{n+1})$).  Therefore, as before, we can start the process again at time $\tau^*$, which contradicts the terminal nature of $\tau^*$.  This concludes the proof.

\end{proof}

\subsection{Proof of Proposition~\ref{prop:cash}}\label{sec:proof-cash}
\begin{proof}
This clearing problem is a contraction mapping due to the continuous piecewise linearity of the clearing problem along with the obligations to society; therefore uniqueness follows from the Banach fixed point theorem.
\end{proof}

\subsection{Proof of Corollary~\ref{cor:defaults}}\label{sec:proof-defaults}
\begin{proof}
This result follows by an application of Theorem~\ref{thm:continuous}.  Specifically, we need only verify that there exists a clearing solution at each default time as the inter-default times follow exactly the structure of the no-default scenario presented in the prior section (with the addition of the capital account $K$).
Consider the default time $\tau_j$ of bank $j\in\ncal$.  We will construct a greatest clearing solution following an argument similar to the fictitious default argument of~\cite{EN01,RV13}.
First, assume there are no simultaneous defaults; by Proposition~\ref{prop:cash}, there exists a greatest joint cash account and payments at that time. These payments, however, impact the capital account of all non-defaulting firms. If no defaults are triggered then this is a greatest clearing solution at $\tau_j$ and we can continue to the next default time.  If any additional defaults are triggered then the clearing problem presented in Proposition~\ref{prop:cash} is rerun with the inclusion of these simultaneous defaults.  This procedure is continued until no new defaults occur and it results in the smallest possible default cascade at $\tau_j$.
\end{proof}

\section{Derivation of Inter-Default Time Continuous-Time Clearing from Discrete-Time Clearing}\label{sec:discrete-top}
\subsection{Inter-Default Time Discrete-Time Clearing}\label{sec:discrete}
Consider now a discrete set of clearing times $\bbt$, e.g., $\bbt = \{0,1,\dots,T\}$ for some (finite) terminal time $T < \infty$ or $\bbt = \bbn$.
Such a setting is presented in \cite{CC15}.
As before, we will use the notation from \cite{cont2013ito} such that the process $Z: \bbt \to \bbr^n$ has value of $Z(t)$ at time $t \in \bbt$ and history $Z_t := (Z(s))_{s = 0}^t$.

\begin{figure}
\centering
\begin{subfigure}[t]{\linewidth}
\centering
\begin{tikzpicture}[x=\linewidth/10]
\draw[draw=none] (0,9.5) rectangle (6,10) node[pos=.5]{\bf Balance Sheet};
\draw[draw=none] (0,9) rectangle (3,9.5) node[pos=.5]{\bf Assets};
\draw[draw=none] (3,9) rectangle (6,9.5) node[pos=.5]{\bf Liabilities};

\filldraw[fill=blue!20!white,draw=none] (0,7) rectangle (3,9) node[pos=.5,style={align=center}]{Cash-Flow @ $t = 0$ \\ $\Delta  x_i(0)$};
\filldraw[fill=blue!20!white,draw=none] (0,6) rectangle (3,7) node[pos=.5,style={align=center}]{Cash-Flow @ $t = 1$ \\ $\Delta  x_i(1)$};
\filldraw[fill=yellow!20!white,draw=none] (0,2) rectangle (3,6) node[pos=.5,style={align=center}]{Interbank @ $t = 0$ \\ $\sum_{j = 1}^n \pi_{ji}(0)p_j(0)$};
\filldraw[fill=yellow!20!white,draw=none] (0,0) rectangle (3,2) node[pos=.5,style={align=center}]{Interbank @ $t = 1$ \\ $\sum_{j = 1}^n \pi_{ji}(1)p_j(1)$};

\filldraw[fill=purple!20!white,draw=none] (3,5) rectangle (6,9) node[pos=.5,style={align=center}]{Cash-Flow @ $t = 0$ \\ $\sum_{j = 1}^n \Delta  L_{ij}(0)$};
\filldraw[fill=purple!20!white,draw=none] (3,1.5) rectangle (6,5) node[pos=.5,style={align=center}]{Cash-Flow @ $t = 1$ \\ $\sum_{j = 1}^n \Delta  L_{ij}(1)$};
\filldraw[fill=orange!20!white,draw=black] (3,0) rectangle (6,1.5) node[pos=.5,style={align=center}]{Cash Account \\ $V_i(1)$};

\draw[draw=black] (0,0) rectangle (3,9);
\draw[draw=black] (3,0) rectangle (6,9);

\draw[-,dashed] (0,7) -- (3,7);
\draw[-] (0,6) -- (3,6);
\draw[-,dashed] (0,2) -- (3,2);
\draw[-,dashed] (3,5) -- (6,5);
\end{tikzpicture}
\caption{Stylized actualized balance sheet for firm $i$ with two time periods.}
\label{fig:BS}
\end{subfigure}
~\\[5mm]
\begin{subfigure}[t]{\linewidth}
\begin{tikzpicture}[x=\linewidth/17]
\draw[draw=none] (0,6.5) rectangle (6,7) node[pos=.5]{\bf Cash Flow Statement @ $t = 0$};
\draw[draw=none] (0,6) rectangle (3,6.5) node[pos=.5]{\bf Assets};
\draw[draw=none] (3,6) rectangle (6,6.5) node[pos=.5]{\bf Liabilities};

\filldraw[fill=blue!20!white,draw=black] (0,4) rectangle (3,6) node[pos=.5,style={align=center}]{Cash-Flow \\ $\Delta  x_i(0)$};
\filldraw[fill=yellow!20!white,draw=black] (0,0) rectangle (3,4) node[pos=.5,style={align=center}]{Interbank \\ $\sum_{j = 1}^n \pi_{ji}(0)p_j(0)$};

\filldraw[fill=purple!20!white,draw=black] (3,2) rectangle (6,6) node[pos=.5,style={align=center}]{Cash-Flow \\ $\sum_{j = 1}^n \Delta  L_{ij}(0)$};
\filldraw[fill=orange!20!white,draw=black] (3,0) rectangle (6,2) node[pos=.5,style={align=center}] (t) {Cash Account \\ $V_i(0)$};

\draw[draw=none] (7,6.5) rectangle (13,7) node[pos=.5]{\bf Cash Flow Statement @ $t = 1$};
\draw[draw=none] (7,6) rectangle (10,6.5) node[pos=.5]{\bf Assets};
\draw[draw=none] (10,6) rectangle (13,6.5) node[pos=.5]{\bf Liabilities};

\foreach \y in {0.5}
{
\filldraw[fill=blue!20!white,draw=black] (7,4+\y) rectangle (10,5+\y) node[pos=.5,style={align=center}]{Cash-Flow \\ $\Delta  x_i(1)$};
\filldraw[fill=green!20!white,draw=black] (7,2+\y) rectangle (10,4+\y) node[pos=.5,style={align=center}] (t+1) {Carry-Forward \\ $V_i(0)^+$};
\filldraw[fill=yellow!20!white,draw=black] (7,0+\y) rectangle (10,2+\y) node[pos=.5,style={align=center}]{Interbank \\ $\sum_{j = 1}^n \pi_{ji}(1)p_j(1)$};

\filldraw[fill=purple!20!white,draw=black] (10,1.5+\y) rectangle (13,5+\y) node[pos=.5,style={align=center}]{Cash-Flow \\ $\sum_{j = 1}^n \Delta  L_{ij}(1)$};
\filldraw[fill=red!20!white,draw=none] (13.2,1+\y) rectangle (16,2+\y) node[pos=.5,style={align=center}]{Carry-Forward \\ $V_i(0)^- = 0$};
\filldraw[fill=red!20!white,draw=none] (13,1.5+\y) -- (13.2,2+\y) -- (13.2,1+\y) -- cycle;
\draw (13,1.5+\y) -- (13.2,2+\y) -- (16,2+\y) -- (16,1+\y) -- (13.2,1+\y) -- cycle;
\filldraw[fill=orange!20!white,draw=black] (10,0+\y) rectangle (13,1.5+\y) node[pos=.5,style={align=center}]{Cash Account \\ $V_i(1)$};

\draw[->,line width=1mm] (6,1) -- (7,3+\y);
\draw[->,dotted,line width=1mm] (6,.25) -- (14,.25) -- (14,1+\y);
}
\end{tikzpicture}
\caption{Stylized cash flow statement of actualized balance sheet for firm $i$ at times $0$ and $1$.}
\label{fig:discrete-BS}
\end{subfigure}
\caption{Comparison of the full balance sheet to the cash flow statements utilized for Section~\ref{sec:discrete}.}
\label{fig:BalanceSheet}
\end{figure}

In this setting, we will consider the external (incoming) cash flow $\Delta  x: \bbt \to \bbr^{n+1}_+$ and nominal liabilities $\Delta  L: \bbt \to \bbr^{(n+1) \times (n+1)}_+$ to be functions of the clearing time, i.e., as assets and liabilities with different maturities. The external cash in-flows and nominal liabilities can explicitly depend on the clearing results of the prior times (i.e., $\Delta  x(t,V_{t-1})$ and $\Delta  L(t,V_{t-1})$) without affecting the existence and uniqueness results we present, but for simplicity of notation we will focus on the case where the external assets and nominal liabilities are independent of the health and wealth of the firms.
As introduced in the mean body of the text, throughout we are considering the discounted assets and liabilities so as to simplify notation.

In contrast to the static Eisenberg-Noe framework, herein we need to consider the results of the prior times.  In particular, if firm $i$ has positive equity at time $t-1$ (i.e., $V_i(t-1) > 0$) then these additional assets are available to firm $i$ at time $t$ in order to satisfy its obligations.  Similarly, if firm $i$ has negative cash account at time $t-1$  (i.e., $V_i(t-1) < 0$) then the debts that the firm has not yet paid will roll-forward in time and be due at the next period.  

In this way a firm can be considered in \emph{delinquency} at a time if it is unable to satisfy its obligations at that time. Defaults are treated in continuous time within the main body of this work. See Figure~\ref{fig:discrete-BS} for a stylized cash flow statement example for a firm that has positive cash account at time $0$ that rolls forward to time $1$.  The full (actualized) balance sheet for this example with only those two time periods is displayed in Figure~\ref{fig:BS}; we note that the full balance sheet as depicted considers actualized payments rather than the book value of the obligations.

\begin{assumption}\label{ass:initial}
Before the time of interest, all firms are solvent and liquid.  That is, $V_i(-1) \geq 0$ for all firms $i \in \ncal_0$.
\end{assumption}

We can now construct the total liabilities and relative liabilities at time $t \in \bbt$ as
\begin{align*}
\bar p_i(t,V_{t-1}) &:= \sum_{j \in \ncal_0} \Delta  L_{ij}(t) + V_i(t-1)^-\\
\pi_{ij}(t,V_{t-1}) &:= \begin{cases} \frac{\Delta  L_{ij}(t) + \pi_{ij}(t-1,V_{t-2})V_i(t-1)^-}{\bar p_i(t,V_{t-1})} &\text{if } \bar p_i(t,V_{t-1}) > 0 \\ \frac{1}{n} &\text{if } \bar p_i(t,V_{t-1}) = 0, \; j \neq i\\ 0 &\text{if } \bar p_i(t,V_{t-1}) = 0, \; j = i \end{cases} \quad \forall i,j \in \ncal_0.
\end{align*}
In this way, coupled with the accumulation of positive cash account over time, the clearing cash accounts must satisfy the following fixed point problem in time $t$ cash accounts:
\begin{equation}\label{eq:EN-discrete}
V(t) = V(t-1)^+ + \Delta  x(t) + \Pi(t,V_{t-1})^\T \left[\bar p(t,V_{t-1}) - V(t)^-\right]^+ - \bar p(t,V_{t-1}).
\end{equation}
That is, all firms have a clearing cash account that is the summation of their positive equity at the prior time, the new incoming external cash flow, and the payments made by all other firms minus the total obligations of the firm (including the prior unpaid liabilities).  In this way we can construct the cash accounts of firms forward in time. This can be considered a discrete-time extension of \eqref{eq:EN-e}.

We now wish to consider a reformulation of \eqref{eq:EN-discrete} with a functional relative exposures matrix $A$, i.e.,
\begin{align}
\label{eq:discrete-A}
a_{ij}(t,V_t) &:= \begin{cases} \pi_{ij}(t,V_{t-1}) &\text{if } \bar p_i(t,V_{t-1}) \geq V_i(t)^-\\ \frac{\Delta  L_{ij}(t) + \pi_{ij}(t-1,V_{t-2})V_i(t-1)^-}{V_i(t)^-} &\text{if } \bar p_i(t,V_{t-1}) < V_i(t)^-\end{cases} \quad \forall i,j \in \ncal_0.
\end{align}
Here we introduce the functional matrix $A: \bbt \times \bbr^{n+1} \to [0,1]^{(n+1) \times (n+1)}$ to be the relative exposure matrix.  That is, $a_{ij}(t,V_t)V_i(t)^-$ provides the (negative) impact that firm $i$'s losses have on firm $j$'s cash account at time $t \in \bbt$.  This is in contrast to $\Pi$, the relative liabilities, in that it endogenously imposes the limited exposures concept.  In this work the two notions will generally coincide, but for mathematical simplicity we introduce this relative exposure matrix.  For the equivalence we seek, we define the relative exposures so that
\[\Delta  L(t)^\T\vec{1} + A(t-1,V_{t-1})^\T V(t-1)^- - A(t,V_t)^\T V(t)^- = \Pi(t,V_{t-1})^\T [\bar p(t,V_{t-1}) - V(t)^-]^+\]
for any $V(t) \in \bbr^{n+1}$.  This formulation is such that if the positive part were removed from the right hand side, the relative exposures $A$ would be defined exactly as the relative liabilities $\Pi$ by construction.
In particular, we will define the relative exposures element-wise and pointwise so as to encompass the limited exposures as in \eqref{eq:discrete-A}.  If $\bar p_i(t,V_{t-1}) > 0$ then we can simplify this further as $a_{ij}(t,V_t) = \frac{\Delta  L_{ij}(t) + a_{ij}(t-1,V_{t-1})V_i(t-1)^-}{\max\{\bar p_i(t,V_{t-1}) , V_i(t)^-\}}$.

Using the notation and terms above we can rewrite \eqref{eq:EN-discrete} with respect to the cash flows $\Delta  x$ and relative exposures $A$ as
\begin{align}
\nonumber V(t) &= V(t-1)^+ + \Delta  x(t) + \Pi(t,V_{t-1})^\T [\bar p(t,V_{t-1}) - V(t)^-]^+ - \bar p(t,V_{t-1})\\
\nonumber &= V(t-1)^+ + \Delta  x(t) + \Delta  L(t)^\T\vec{1} + A(t-1,V_{t-1})^\T V(t-1)^-\\
\nonumber &\qquad - A(t,V_t)^\T V(t)^- - \Delta  L(t)\vec{1} -V(t-1)^-\\
\label{eq:discrete-V} &= V(t-1) + \Delta  x(t) + \Delta  L(t)^\T\vec{1} + A(t-1,V_{t-1})^\T V(t-1)^- - A(t,V_t)^\T V(t)^- - \Delta  L(t)\vec{1}.
\end{align}

With this setup we now wish to extend the existence and uniqueness results of \cite{EN01} to discrete-time.
\begin{theorem}\label{thm:discrete}
Let $(\Delta x,\Delta L): \bbt \to \bbr^{n+1}_+ \times \bbr^{(n+1) \times (n+1)}_+$ define a dynamic financial network such that every bank owes to the societal node at all times $t \in \bbt$, i.e., $L_{i0}(t) > 0$ for all banks $i \in \ncal$ and times $t \in \bbt$.  Under Assumption~\ref{ass:initial}, there exists a unique solution of clearing cash accounts $V: \bbt \to \bbr^{n+1}$ to \eqref{eq:discrete-V}.
\end{theorem}
\begin{proof}
We will prove this result inductively.  First consider time $t = 0$. Recall from Assumption~\ref{ass:initial} that $V(-1) \geq 0$. The clearing cash accounts at time $0$ follow the fixed point equation
\[V(0) = \Phi(0,V(0)) := V(-1) + \Delta x(0) + \Delta L(0)^\T\vec{1} - A(0,V_0)^\T V(0)^- - \Delta L(0)\vec{1}.\]
Note that, by construction, $A(0,V_0)^\T V(0)^- \leq \Delta L(0)^\T \vec{1}$.  Therefore any clearing solution must fall within the compact range $[V(-1) + \Delta x(0) -\Delta L(0) \vec{1},V(-1) + \Delta x(0)+\Delta L(0)^\T\vec{1}-\Delta L(0)\vec{1}] \subseteq \bbr^{n+1}$.  It is clear from the definition that $\Phi(0,\cdot)$ is a monotonic operator, and thus there exists a greatest and least clearing solution $V^\uparrow(0) \geq V^\downarrow(0)$ by Tarski's fixed point theorem \cite[Theorem 11.E]{Z86}, both of which must fall within this domain.  Further, $a_{ij}(0,V_0) = \frac{\Delta L_{ij}(0)}{\sum_{k \in \ncal_0}\Delta L_{ik}(0)}$ (for $i \in \ncal$ and $j \in \ncal_0$) for any cash account $V(0)$ in this domain since $V(-1) + \Delta x(0) - \Delta L(0) \vec{1} \geq -\Delta L(0) \vec{1} = -\bar p(0,V_{-1})$.  We will prove uniqueness as it is done in \cite{EN01} by noting additionally that we can assume that the societal node will always have positive equity (i.e., $V^\downarrow(0) \geq 0$).  First, we will show that the positive equities are the same for every firm no matter which clearing solution is chosen, i.e., $V_i^\uparrow(0)^+ = V_i^\downarrow(0)^+$ for every firm $i \in \ncal_0$.  By definition $V^\uparrow(0) \geq V^\downarrow(0)$ and using $\sum_{j \in \ncal_0} a_{ij}(0) = 1$ for every firm $i \in \ncal_0$ we recover
\begin{align*}
\sum_{i \in \ncal_0} V_i^\uparrow(0)^+ &= \sum_{i \in \ncal_0} \left[V_i^\uparrow(0) + V_i^\uparrow(0)^-\right]\\
&= \sum_{i \in \ncal_0} \left[V_i(-1) + \Delta x_i(0) + \sum_{j \in \ncal} \Delta L_{ji}(0) - \sum_{j \in \ncal} a_{ji}(0,V_0^\uparrow) V_j^\uparrow(0)^- - \sum_{j \in \ncal_0} \Delta L_{ij}(0) + V_i^\uparrow(0)^-\right]\\
&= \sum_{i \in \ncal_0} \left[V_i(-1) + \Delta x_i(0) + \sum_{j \in \ncal} \Delta L_{ji}(0) - \sum_{j \in \ncal_0} \Delta L_{ij}(0)\right]\\
&\qquad - \sum_{j \in \ncal} V_j^\uparrow(0)^- \sum_{i \in \ncal_0} a_{ji}(0,V_0^\uparrow) + \sum_{i \in \ncal_0} V_i^\uparrow(0)^-\\
&= \sum_{i \in \ncal_0} \left[V_i(-1) + \Delta x_i(0) + \sum_{j \in \ncal} \Delta L_{ji}(0) - \sum_{j \in \ncal_0} \Delta L_{ij}(0)\right] = \sum_{i \in \ncal_0} V_i^\downarrow(0)^+.
\end{align*}
Therefore it must be the case that $V_i^\uparrow(0)^+ = V_i^\downarrow(0)^+$ for all firms $i \in \ncal_0$.
Since we assume that the societal node will always have positive equity, it must be the case that $V_0^\uparrow(0) = V_0^\downarrow(0)$.  Now since we assume that each node $i \in \ncal$ owes to the societal node, if any firm $i \in \ncal$ is such that $0 \geq V_i^\uparrow(0) > V_i^\downarrow(0)$ then it must be that $V_0^\uparrow(0) > V_0^\downarrow(0)$, which is a contraction.

Continuing with the inductive argument, assume that the history of clearing cash accounts $V_{t-1}$ up to time $t-1$ is fixed and known.  The clearing cash accounts at time $t$ follow the fixed point equation
\[V(t) = \Phi(t,V(t)) := V(t-1) + \Delta x(t) + \Delta L(t)^\T\vec{1} - A(t,V_t)^\T V(t)^- + A(t-1,V_{t-1})^\T V(t-1)^- - \Delta L(t)\vec{1}.\]
Note that, by construction, $A(t,V_t)^\T V(t)^- \leq \Delta L(t)^\T \vec{1} + A(t-1,V_{t-1})^\T V(t-1)^-$.  Therefore any clearing solution must fall within the compact range $[V(t-1) + \Delta x(t) - \Delta L(t)\vec{1} , V(t-1) + \Delta x(t) + \Delta L(t)^\T\vec{1} + A(t-1,V_{t-1})^\T V(t-1)^- - \Delta L(t)\vec{1}] \subseteq \bbr^{n+1}$.  Further, $a_{ij}(t,V_t) = \frac{\Delta L_{ij}(t) + a_{ij}(t-1,V_{t-1})V_i(t-1)^-}{\sum_{k \in \ncal_0} \Delta L_{ik}(t) + V_i(t-1)^-}$ (for $i \in \ncal$ and $j \in \ncal_0$) for any cash accounts $V(t)$ in this domain since $V(t-1) + \Delta x(t) - \Delta L(t)\vec{1} \geq -V(t-1)^- - \Delta L(t)\vec{1} = -\bar p(t,V_{t-1})$.  Thus we can apply the same logic as in the time $0$ case to recover existence and uniqueness of the clearing cash accounts $V(t)$ at time $t$.
\end{proof}

\begin{remark}\label{rem:regularnetwork}
The assumption that all firms have obligations to the societal node $0$ at all times $t \in \bbt$ guarantees that the financial system is a ``regular network'' (see \cite[Definition 5]{EN01}) at all times.
\end{remark}

\subsection{Discrete to Continuous-Time}\label{sec:disc-to-cont}
In the prior section on a discrete-time model for clearing cash accounts, we implicitly assumed a constant time-step between each clearing date of $\Dt = 1$ throughout.  In order to construct a continuous-time clearing model we will begin by making a discrete-time model with an explicit $\Dt>0$ term.  In fact, this is immediate from the prior construction with a minor alteration to the cash flow term.  Herein we construct the external cash flow at time $t$ to be given by $\Gd x(t,\Dt) := \int_{t-\Dt}^t dx(s)$ and the nominal liabilities at time $t$ are similarly provided by $\Gd L(t,\Dt) := \int_{t-\Dt}^t dL(s)$ where both $dx$ and $dL$ are discussed in Section~\ref{sec:continuous} (additionally, we set $dx(t) = 0$ and $dL(t) = 0$ for any times $t < 0$).  

With these parameters we can construct the $\Dt$-discrete-time clearing process $V(t,\Dt)$ and exposure matrix $A(t,\Dt,V_t(\Dt))$ by:
\begin{align}
\label{eq:discrete-Vdt} \begin{split}V(t,\Dt) &= V(t-\Dt,\Dt) + \Gd x(t,\Dt) + \Gd L(t,\Dt)^\T\vec{1} - A(t,\Dt,V_t(\Dt))^\T V(t,\Dt)^-\\ &\qquad + A(t-\Dt,\Dt,V_{t-\Dt}(\Dt))^\T V(t-\Dt,\Dt)^- - \Gd L(t,\Dt)\vec{1}\end{split}\\
\label{eq:discrete-Adt} \begin{split}a_{ij}(t,\Dt,V_t(\Dt)) &= \frac{\Gd L_{ij}(t,\Dt)+a_{ij}(t-\Dt,\Dt,V_{t-\Dt}(\Dt))V_i(t-\Dt,\Dt)^-}{\max\{\sum_{k \in \ncal_0} \Gd L_{ik}(t,\Dt)+V_i(t-\Dt,\Dt)^- , V_i(t,\Dt)^-\}}\ind{i \neq 0}\\ &\qquad + \frac{1}{n}\ind{i = 0,\; j \neq 0} \qquad \forall i,j \in \ncal_0.\end{split}
\end{align}
Here we assume that $V(t) = V(-1) \geq 0$ for every time $t < 0$ as in Assumption~\ref{ass:initial}.  This construction can be computed either in continuous-time $t \in \bbt$ with sliding intervals of size $\Dt$ or at the discrete times $t \in \{0,\Dt,...,T\}$.
The existence and uniqueness of this system follow exactly as in Theorem~\ref{thm:discrete} under Assumption~\ref{ass:society}.

\begin{corollary}\label{cor:discrete-cont}
Let $(dx,dL): \bbt \to \bbr^{n+1}_+ \times \bbr^{(n+1) \times (n+1)}_+$ define a dynamic financial network satisfying Assumption~\ref{ass:society}.  Under Assumption~\ref{ass:initial}, there exists a unique solution of clearing cash accounts $V: \bbt \times \bbr_{++} \to \bbr^{n+1}$ to \eqref{eq:discrete-Vdt}.  Further, the clearing cash accounts are jointly continuous in time and step-size.
\end{corollary}
\begin{proof}
Existence and uniqueness of the clearing solutions follows from Theorem~\ref{thm:discrete}.  To prove continuity we will employ an induction argument.  To do so, we will consider the reduced domain $V: \bbt \times [\epsilon,\infty) \to \bbr^{n+1}$ for some $\epsilon > 0$.  That is, we restrict the step-size $\Dt \geq \epsilon$.  As we will demonstrate that the continuity argument holds for any $\epsilon > 0$ then the desired result must hold as well.  Before continuing, consider an expanded version of the recursive formulation of \eqref{eq:discrete-Vdt}, i.e.,
\begin{equation}\label{eq:discrete-Vdt-expand}
V(t,\Dt) = V(-1) + \int_0^t dx(s) + \int_0^t dL(s)^\T\vec{1} - A(t,\Dt,V_t(\Dt))^\T V(t,\Dt)^- - \int_0^t dL(s)\vec{1}
\end{equation}
for all times $t \in \bbt$.  Fix the minimal step-size $\epsilon > 0$.  Note that the relative exposures satisfy $a_{ij}(t,\Dt,V_t(\Dt)) := \frac{\int_0^t dL_{ij}(s)}{\sum_{k \in \ncal_0} \int_0^t dL_{ik}(s)}$ for any time $t \in [0,\epsilon)$ by the assumption that $V(-1) \geq 0$.  Thus we can conclude $V: [0,\epsilon) \times [\epsilon,\infty) \to \bbr^{n+1}$ is continuous by an application of \cite[Proposition A.2]{feinstein2014measures}.
Now, by way of induction, assume that $V: [0,s) \times [\epsilon,\infty) \to \bbr^{n+1}$ is continuous for some $s > 0$.  Again, by \cite[Proposition A.2]{feinstein2014measures}, we are able to immediately conclude that $V: [0,s+\epsilon) \cap \bbt \times [\epsilon,\infty) \to \bbr^{n+1}$ is continuous.  As we are able to always extend the continuity result by $\epsilon > 0$ in time, the result is proven.
\end{proof}

Now we want to consider the limiting behavior of this discrete-time system as $\Dt$ tends to 0.
To do so, first, we will consider the formulation of the relative exposures $a_{ij}$ from bank $i$ to $j$.  From Corollary~\ref{cor:discrete-cont} and Assumption~\ref{ass:society}, we know that for any time $t \in \bbt$ and bank $i \in \ncal$ it must follow that $\sum_{k \in \ncal_0} \Gd L_{ik}(t,\Dt) + V_i(t-\Dt,\Dt)^- \geq V_i(t,\Dt)^-$ for $\Dt > 0$ small enough due to the joint continuity of the cash accounts in time and step-size.  Thus in the limiting case, as $\Dt \searrow 0$, we find that we can consider the relative liabilities rather than the relative exposures, i.e., for $\Dt$ small enough
\begin{equation}
\label{eq:discrete-Adt-limiting}
\begin{split}a_{ij}(t,\Dt,V_t(\Dt)) &= \frac{\Gd L_{ij}(t,\Dt)+a_{ij}(t-\Dt,\Dt,V_{t-\Dt}(\Dt))V_i(t-\Dt,\Dt)^-}{\sum_{k \in \ncal_0} \Gd L_{ik}(t,\Dt)+V_i(t-\Dt,\Dt)^-}\ind{i \neq 0}\\
&\qquad + \frac{1}{n}\ind{i = 0,\; j \neq 0} \qquad \forall i,j \in \ncal_0.\end{split}
\end{equation}
Rearranging these terms we are able to deduce that, for any firm $i \in \ncal$,
\begin{equation}\label{eq:discrete-Adt-limiting2}
\begin{split}&[a_{ij}(t,\Dt,V_t(\Dt)) - a_{ij}(t-\Dt,\Dt,V_{t-\Dt}(\Dt))] V_i(t-\Dt,\Dt)^- \\ &\qquad = \Gd L_{ij}(t,\Dt) - a_{ij}(t,\Dt,V_t(\Dt)) \sum_{k \in \ncal_0} \Gd L_{ik}(t,\Dt).\end{split}
\end{equation}

Coupled with the assumption that the societal node always has positive cash account, we are thus able to consider the limiting behavior of \eqref{eq:discrete-Vdt} as the step-size $\Dt$ tends to 0.  To do so, consider
\begin{align}
\begin{split}V(t,\Dt) &= V(t-\Dt,\Dt) + \Gd x(t,\Dt) + \Gd L(t,\Dt)^\T\vec{1} - A(t,\Dt,V_t(\Dt))^\T V(t,\Dt)^-\\ &\qquad + A(t-\Dt,\Dt,V_{t-\Dt})^\T V(t-\Dt,\Dt)^- - \Gd L(t,\Dt)\vec{1}\end{split}\\
\begin{split} &= V(t-\Dt,\Dt) + \Gd x(t,\Dt) + \Gd L(t,\Dt)^\T\vec{1} - \Gd L(t,\Dt)\vec{1}\\ &\qquad - A(t,\Dt,V_t(\Dt))^\T V(t,\Dt)^- + A(t,\Dt,V_t(\Dt))^\T V(t-\Dt,\Dt)^-\\ &\qquad - A(t,\Dt,V_t(\Dt))^\T V(t-\Dt,\Dt)^- + A(t-\Dt,\Dt,V_{t-\Dt})^\T V(t-\Dt,\Dt)^-\end{split}\\
\begin{split} &= V(t-\Dt,\Dt) + \Gd x(t,\Dt) - A(t,\Dt,V_t(\Dt))^\T [V(t,\Dt)^- - V(t-\Dt,\Dt)^-]\\ &\qquad - [I - A(t,\Dt,V_t(\Dt))^\T] \Gd L(t,\Dt) \vec{1}. \label{eq:f_limit}\end{split}
\end{align}
Consider the matrix of delinquent firms, $\Lambda(V) \in \{0,1\}^{(n+1) \times (n+1)}$ to be the diagonal matrix of banks in delinquency, i.e.
\[\Lambda_{ij}(V) = \begin{cases}1 &\text{if } i = j \neq 0 \text{ and } V_i < 0 \\ 0 &\text{else}\end{cases} \quad \forall i,j \in \ncal_0.\]
We are able to set $\Lambda_{00}(V) = 0$ without loss of generality since, by assumption, the outside node $0$ has no obligations into the system.
Thus, as with \eqref{eq:discrete-Adt-limiting}, by continuity of the clearing cash accounts and $\Dt$ small enough, we can conclude that except at specific event times it follows that $\Lambda(V(t,\Dt)) = \Lambda(V(t-\Dt,\Dt))$.  Thus, with this added notation we can reformulate the clearing cash accounts equation \eqref{eq:discrete-Vdt} as
\begin{align*}
\begin{split}V(t,\Dt) &= V(t-\Dt,\Dt) + A(t,\Dt,V_t(\Dt))^\T \Lambda(V(t,\Dt)) [V(t,\Dt) - V(t-\Dt,\Dt)] + \Gd x(t,\Dt) \\ &\qquad - [I - A(t,\Dt,V_t(\Dt))^\T] \Gd L(t,\Dt) \vec{1}.\end{split}
\end{align*}
For the construction of a differential form we can consider the equivalent formulation
\begin{align}
\label{eq:discrete-Vdt-limiting}
\begin{split} V(t,\Dt) &- V(t-\Dt,\Dt) =\\
&[I - A(t,\Dt,V_t(\Dt))^\T \Lambda(V(t,\Dt))]^{-1} \left(\begin{array}{l}\Gd x(t,\Dt) \\ \quad - [I - A(t,\Dt,V_t(\Dt))^\T] \Gd L(t,\Dt) \vec{1}\end{array}\right).\end{split}
\end{align}
Note that $I - A(t,\Dt,V_t(\Dt))^\T \Lambda(V(t,\Dt))$ is invertible by standard input-output results and as proven in Proposition~\ref{prop:Leontief}.

Utilizing \eqref{eq:discrete-Vdt-limiting} and \eqref{eq:discrete-Adt-limiting} and taking the limit as $\Dt \searrow 0$, we are thus able to construct the joint differential system of \eqref{eq:continuous-V} and \eqref{eq:continuous-A}, i.e.,
\begin{align*}
dV(t) &= [I - A(t)^\T \Lambda(V(t))]^{-1} \left(dx(t) - [I - A(t)^\T] dL(t) \vec{1}\right)\\
da_{ij}(t) &= \begin{cases} \frac{d^2L_{ij}(t) - a_{ij}(t)\sum_{k \in \ncal_0} d^2L_{ik}(t)}{\sum_{k \in \ncal_0} dL_{ik}(t)} &\text{if } i \in \ncal, \; V_i(t) \geq 0 \\ \frac{dL_{ij}(t) - a_{ij}(t)\sum_{k \in \ncal_0} dL_{ik}(t)}{V_i(t)^-} &\text{if } i \in \ncal, \; V_i(t) < 0 \\ 0 &\text{if } i = 0\end{cases} \quad \forall i,j \in \ncal_0
\end{align*}
with initial conditions $V(0) \geq 0$ given and $a_{ij}(0) = \frac{dL_{ij}(0)}{\sum_{k \in \ncal_0} dL_{ik}(0)}\ind{i \neq 0} + \frac{1}{n}\ind{i = 0,\; j \neq 0}$ for all firms $i,j \in \ncal_0$.
As in \eqref{eq:discrete-Vdt-limiting}, $I - A(t)^\T \Lambda(V(t))$ is invertible by standard input-output results and as proven in Proposition~\ref{prop:Leontief}.
The first case in \eqref{eq:continuous-A} is constructed by noting that $a_{ij}(t) = \frac{dL_{ij}(t)}{\sum_{k \in \ncal_0} dL_{ik}(t)}$ if $V_i(t) \geq 0$ and $i \in \ncal$ and $da_{0j}(t) = 0$ for any firm $j \in \ncal_0$ for all times $t$; the second case in \eqref{eq:continuous-A} follows from \eqref{eq:discrete-Adt-limiting2} and taking the limit as $\Dt \searrow 0$.  Note that this differential system is discontinuous, with events at times when firms cross the 0 cash boundary, i.e., when $\Lambda(V(t)) \neq \Lambda(V(t^-))$.  As such, we will consider the differential system on the inter-event intervals, then update the differential system between these intervals.  This is made more explicit in the proof of Theorem~\ref{thm:continuous}.
As with the discrete-time system~\eqref{eq:discrete-Adt}, the relative exposures follow the incoming proportional obligations if a firm has a surplus of cash.  When a firm is in delinquency, the relative exposures follow a path that provides the average relative obligations between new liabilities and the prior unpaid liabilities.

\section{The Static Eisenberg-Noe Model as a Differential System}\label{sec:discussion-EN}

Herein we will consider the case in which the relative liabilities are constant through time.  That is, we consider the setting in which $dL_{ij}(s)/\sum_{k \in \ncal_0} dL_{ik}(s) = dL_{ij}(t)/\sum_{k \in \ncal_0} dL_{ik}(t)$ for all times $s,t \in \bbt$ and firms $i,j \in \ncal_0$ so long as $\sum_{k \in \ncal_0} dL_{ik}(s), \sum_{k \in \ncal_0} dL_{ik}(t) > 0$.  Following Assumption~\ref{ass:society}, the gross marginal liabilities of all banks $i \in \ncal$ and at all times $t \in \bbt$ are strictly great than 0. The key implication of this assumption is that the relative exposures matrix in~\eqref{eq:continuous-A} can be found explicitly to equal the relative liabilities
\[a_{ij}(t) = \pi_{ij} := \begin{cases} \frac{dL_{ij}(0)}{\sum_{k \in \ncal_0} dL_{ik}(0)} &\text{if } i \in \ncal \\ \frac{1}{n}\ind{j \in \ncal} &\text{if } i = 0\end{cases}\]
for all times $t$ and banks $i,j \in \ncal_0$. 
To further simplify the dynamics of this system, we will set $\dcal_i \equiv 1$ for every bank $i$.  That is, we consider the network dynamics introduced in Section~\ref{sec:continuous} alone without the need to consider bank defaults.
This setting (independently) replicates the results from \cite{sonin2020continuous} on a differential formulation of the static Eisenberg-Noe system.

Further, expanding and solving the differential system \eqref{eq:continuous-V}, we deduce that the continuous-time clearing cash accounts must satisfy the fixed point problem
\begin{equation}\label{eq:continuous-static}
V(t) = V(0) + x(t) + L(t)^\T\vec{1} - L(t)\vec{1} - \Pi^\T V(t)^-
\end{equation}
at all time $t \in \bbt$ with the additional assumption that $x_i(0) = L_{ij}(0) = 0$ for $i \in \ncal$ and $j \in \ncal_0$.  Therefore, if $x(t) \geq \vec{0}$ at some time $t$, it follows that $V(t)$ are the \emph{static} clearing accounts to the Eisenberg-Noe system with aggregated data with nominal liabilities matrix defined by $L(t)$ and external assets given by $x(t)$.  Importantly, if the relative liabilities are kept constant over time, taking aggregated data and considering the static Eisenberg-Noe framework will produce the same \emph{final} clearing cash accounts as the dynamic Eisenberg-Noe setting presented in this section.  However, though the set of delinquent banks at the terminal time is the same the set of defaulting banks in the static setting, the order of delinquencies need not strictly follow the order given in the fictitious default algorithm of \cite{EN01}.

\begin{definition}\label{defn:order-default}
A bank is called a \textbf{\emph{$k$th-order default}} in the static Eisenberg-Noe setting if it is determined to be in default in the $k$th iteration of the fictitious default algorithm (see, e.g., \cite[Section 3.1]{EN01} and included in Appendix~\ref{sec:FDA}).
\end{definition}
We note that the first-order defaults are exactly those firms that have negative cash account even if it has no negative exposure to other firms (i.e., all other firms satisfy their obligations in full).

\begin{proposition}\label{prop:order-defaults}
Let $(x,\bar{L}) \in \bbr_+^{n+1} \times \bbr^{(n+1) \times (n+1)}_+$ denote the static external assets and nominal liabilities.  Define a dynamic system over the time period $\bbt = [0,T]$ such that $V(0) \in [0,x]$, $dL(t) = \frac{1}{T}\bar{L}dt$, and $dx(t) = \frac{1}{T}\left(x - V(0)\right)dt$.  The clearing cash accounts at the terminal time $V(T)$ are equal to those given in the static setting.  Additionally, no firm will ever recover from delinquency in the dynamic setting.  Finally, the first $k$th-order default will become delinquent only after the first $(k-1)$th-order default in the static fictitious default algorithm; in particular, the first firm to become delinquent will be a first-order default in the static fictitious default algorithm.
\end{proposition}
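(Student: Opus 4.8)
The plan is to use that the relative liabilities are held constant, so that the relative‑exposure dynamics \eqref{eq:continuous-A} collapse to $A(t)\equiv\Pi$, the relative‑liabilities matrix of the static network $(x,\bar L)$ (which is regular, since Assumption~\ref{ass:society} forces $\bar L_{i0}>0$). By the computation preceding Definition~\ref{defn:order-default}, the dynamic clearing wealths then satisfy \eqref{eq:continuous-static}; and with the chosen data $\int_0^t dc(s)=\tfrac tT\big(x-V(0)+\bar L^\T\vec{1}-\bar L\vec{1}\big)$, so \eqref{eq:continuous-static} becomes
\[ V(t)=\Big(1-\tfrac tT\Big)V(0)+\tfrac tT\,c-\Pi^\T V(t)^- ,\qquad c:=x+\bar L^\T\vec{1}-\bar L\vec{1}. \]
Since $V(0)\le x$, the right‑hand side equals $\hat x(t)+(\tfrac tT\bar L)^\T\vec{1}-(\tfrac tT\bar L)\vec{1}-\Pi^\T V(t)^-$ with $\hat x(t):=(1-\tfrac tT)V(0)+\tfrac tT x\ge 0$, i.e.\ $V(t)$ is the (unique) static clearing vector of the regular network $(\hat x(t),\tfrac tT\bar L)$ and $D(t):=\{i\in\ncal:V_i(t)<0\}$ is its default set. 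Putting $t=T$ recovers exactly the clearing equation of $(x,\bar L)$, so $V(T)$ equals the static Eisenberg--Noe clearing wealths; this settles the first assertion and identifies the eventual distressed set $D(T)$ with the static default set $D^\ast$.

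Second, I would prove that no firm recovers. On any time‑subinterval on which $D(\cdot)$ equals a fixed set $E$, writing $\Lambda_E$ for the $\{0,1\}$‑diagonal indicator of $E$ we get $V(t)^-=-\Lambda_E V(t)$, hence
\[ V(t)=(I-\Pi^\T\Lambda_E)^{-1}\Big[(1-\tfrac tT)V(0)+\tfrac tT c\Big]=(1-\tfrac tT)\,\tilde V_E+\tfrac tT\,\hat V_E , \]
with $\tilde V_E:=(I-\Pi^\T\Lambda_E)^{-1}V(0)$, $\hat V_E:=(I-\Pi^\T\Lambda_E)^{-1}c$, the inverse existing (Proposition~\ref{prop:Leontief}) and dominating $I$ entrywise, so $\tilde V_E\ge V(0)\ge 0$. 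Thus $t\mapsto V_i(t)$ is affine there, and for $i\in E$ the inequality $V_i(t)<0$ forces $\hat V_{E,i}<0$ (because $\tilde V_{E,i}\ge 0$), so $V_i$ has strictly negative slope $\tfrac1T(\hat V_{E,i}-\tilde V_{E,i})$ on that interval. Now if some recovery occurs, let $t^\ast:=\inf\{t:\exists i,\ s<t,\ V_i(s)<0\le V_i(t)\}$; no firm leaves the distressed set before $t^\ast$, so $D(\cdot)$ is $\subseteq$‑nondecreasing on $[0,t^\ast)$, hence piecewise constant there with at most $n$ jumps, equal to some $E$ on a left‑neighbourhood $[t^\ast-\varepsilon,t^\ast)$. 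The firm realizing the infimum then lies in $E$ and is strictly decreasing from a negative value on $[t^\ast-\varepsilon,t^\ast)$, forcing its value at $t^\ast$ to be negative — contradicting that it is $0$. Hence $D(\cdot)$ is $\subseteq$‑nondecreasing on all of $[0,T]$ and takes finitely many values $\emptyset=E_0\subsetneq\cdots\subsetneq E_r=D^\ast$.

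Third, for the order‑of‑defaults claim I would set up the static fictitious‑default sets $\emptyset=D^0\subsetneq D^1\subsetneq\cdots\subsetneq D^{k^\ast}=D^\ast$ (so $i$ is a $k$th‑order default iff $i\in D^k\setminus D^{k-1}$) and record the identity $\hat V_{D^m}=(I-\Pi^\T\Lambda_{D^m})^{-1}c$, a short input--output computation, so that $D^{m+1}=\{i:\hat V_{D^m,i}<0\}$. The load‑bearing lemma is the monotonicity: if $E\subseteq E'$ and $\hat V_{E,i}\le 0$ for all $i\in E'\setminus E$, then $\hat V_{E'}\le\hat V_E$ (immediate from $\hat V_{E'}-\hat V_E=(I-\Pi^\T\Lambda_{E'})^{-1}\Pi^\T\Lambda_{E'\setminus E}\hat V_E$ and nonnegativity of the matrices). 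Applied along $D^{m}\subseteq E\subseteq D^{m+1}$ it gives $\hat V_{D^{m+1}}\le\hat V_E\le\hat V_{D^m}$, hence $D^{m+1}\subseteq R(E)\subseteq D^{m+2}$, where $R(E):=\{i:\hat V_{E,i}<0\}$ is exactly the set of firms that are, or become, distressed while $D(\cdot)=E$ (by the affine display and $\tilde V_E\ge0$). The crucial qualitative point is rank‑gating: during a regime $E$ the only firms that can newly default are of the next two ranks, and a firm $i$ of rank $k$ can newly default only if $\hat V_{E,i}<0$; since $\hat V_{D^{k-2},i}\ge0$ for such an $i$, this cannot happen unless $E$ already contains a $(k-1)$th‑order default. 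Threading this through the finitely many regimes $E_0\subsetneq\cdots\subsetneq E_r$ by induction shows that whenever a rank‑$k$ firm becomes distressed some rank‑$(k-1)$ firm was distressed strictly earlier; specializing to the first rank‑$k$ default to become distressed gives $\sigma^{(k-1)}<\sigma^{(k)}$ (writing $\sigma^{(k)}$ for that first time), and since $R(E_0)=R(\emptyset)=\{i:c_i<0\}=D^1$ the first firm ever distressed is a first‑order default.

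I expect the threading step to be the main obstacle. The dynamic distressed set $E_q$ need not equal any $D^m$: when several same‑rank defaults coexist, or a rank‑$(m+2)$ default becomes distressed simultaneously with — or even strictly ahead of — some rank‑$(m+1)$ default, the naive invariant "$E_q=D^{m_q}$", and even "$D^{m_q}\subseteq E_q\subseteq D^{m_q+1}$", fails. The resolution is to run the rank‑gating observation and a structural statement about $E_q$ relative to the $D^m$ as a single mutual induction (the structural control is needed to invoke the monotonicity lemma, and the lemma is what yields the next structural control and the rank ordering of the $\sigma^{(k)}$). Parts one and two are routine given \eqref{eq:continuous-static} and Proposition~\ref{prop:Leontief}; the "no firm recovers" statement is, in fact, the only other ingredient needed beyond the rank‑gating lemma to close the argument.
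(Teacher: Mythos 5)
Your handling of the first two assertions is sound and in fact more careful than the paper's own argument: identifying $V(t)$ through \eqref{eq:continuous-static} as the static clearing vector of the scaled network, and the regime-wise affine representation $V(t)=(1-\tfrac tT)\tilde V_E+\tfrac tT\hat V_E$ with $\tilde V_E\ge 0$, which forces a strictly negative slope for every distressed firm and hence no recovery, makes rigorous what the paper only asserts verbally. The genuine gap is in the third assertion, and you flag it yourself: the rank-gating step is never actually established. Your monotonicity lemma compares $\hat V_E$ with $\hat V_{E'}$ only under a sign condition on $\hat V_E$ over $E'\setminus E$; at the moment a rank-$k$ firm would become distressed, the dynamically arisen set $E$ omits some static defaulters of rank $\le k-2$ whose hypothetical wealths $\hat V_{E,j}$ you cannot sign, the sandwich invariant $D^m\subseteq E\subseteq D^{m+1}$ on which your gating display rests can fail (as you concede), and the promised ``mutual induction'' is left without a statement, let alone a proof. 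As written, the core claim --- that the first $k$th-order default cannot precede the first $(k-1)$th-order default --- is therefore not proven. (The paper itself disposes of this by a short contradiction argument: at the first violation no $(k-1)$th-order default has yet occurred, so the firm would be defaulting without regard to the $(k-1)$th-order defaults and would hence itself be of order $k-1$.)

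The gap is closable with a one-line variant of your own lemma, so the plan is salvageable rather than wrong. For \emph{any} $E\subseteq D^m$ one has, unconditionally, $(I-\Pi^\T\Lambda_E)(\hat V_{D^m}-\hat V_E)=\Pi^\T\Lambda_{D^m\setminus E}\hat V_{D^m}\le 0$, because the sign-known vector is now $\hat V_{D^m}$, which is negative on all of $D^m$ by the monotonicity of the static iterates $\hat V_{D^0}\ge\hat V_{D^1}\ge\cdots$ (your lemma applied along the static chain, where the sign condition does hold); multiplying by the nonnegative Leontief inverse (Proposition~\ref{prop:Leontief}) gives $\hat V_{D^m}\le\hat V_E$, hence $R(E)\subseteq R(D^m)=D^{m+1}$ whenever $E\subseteq D^m$, with no sign information on $\hat V_E$ needed. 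Combine this with a first-violation-in-time argument: since there is no recovery, ``ever distressed'' equals ``currently distressed,'' so at the first violation the distressed set contains no firm of rank $k-1$ and (chasing the chain downward) none of rank $\ge k$ either, i.e.\ $E\subseteq D^{k-2}$; then $\hat V_{E,i}\ge\hat V_{D^{k-2},i}\ge 0$ for the rank-$k$ firm $i$, contradicting the $\hat V_{E,i}<0$ required for $i$ to become distressed, and the case $E=\emptyset$ yields the first-order statement. Until this (or an equivalent threading argument) is supplied, your proposal establishes only the first two assertions of the proposition.
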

\begin{proof}
The fact that the clearing cash accounts $V(T)$ are equal to the static Eisenberg-Noe clearing cash accounts (as defined in Proposition~\ref{prop:EN-e}) follows from \eqref{eq:continuous-static} (cf.\ \eqref{eq:EN-e} for the static system). Additionally, since $dx(t)$ is constant in time and firms are beginning in a solvent state, over time the unpaid liabilities may accumulate as a negative factor on bank balance sheets, but there is no outlet to allow for a firm to recover from delinquency.  Finally, by definition, a $k$th-order default is only driven into delinquency through the failure of the $(k-1)$th-order defaults (and \emph{not} solely by the $(k-2)$th-order defaults).  Therefore, by way of contradiction, if a $k$th-order default were to occur before any $(k-1)$th-order default then such a firm must default without regard to what happens to the $(k-1)$th-order defaults, i.e., this firm must be a $(k-1)$th-order default.  By this same logic, the first firm to become delinquent must be a first-order default.
\end{proof}

That the order in which banks enter delinquency differs from the order introduced by the fictitious default algorithm is unsurprising.  Consider a financial system with two subgraphs that are only connected through their obligations to the societal node.  By construction, the default or delinquency of a firm in one subgraph will have no impact on the firms in the other subgraph.  Thus we can construct a network so that all delinquencies in one subgraph (including higher order defaults as defined in Definition~\ref{defn:order-default}) occur before any first-order defaults in the other subgraph.

Notably, the proof of Proposition~\ref{prop:order-defaults} states that, provided the aggregate data (until the terminal time) is kept constant, the clearing cash accounts at the terminal time will be path-independent in this setting.  We will demonstrate this with an illustrative example demonstrating this setting in a small 4 bank (plus societal node) system.  In particular, we will consider the assets $x$ to be defined as (the discrete sampling of) a Brownian bridge so as to provide the appropriate aggregate data at the terminal time.
\begin{example}\label{ex:differentialEN}
Consider a financial system with four banks, each with an additional obligation to an external societal node.  Consider the time interval $\bbt = [0,1]$ with aggregated data such that the initial cash accounts are given by $V(0) = (100,1,3,2,5)^\T$, cash flows $dx$ are such that $x(0) = x(1) = \vec{0}$, and where the nominal liabilities matrix $dL = \bar{L}dt$ is defined by
\[\bar{L} = \left(\begin{array}{ccccc}0 & 0 & 0 & 0 & 0 \\ 3 & 0 & 7 & 1 & 1 \\ 3 & 3 & 0 & 3 & 3 \\ 3 & 1 & 1 & 0 & 1 \\ 3 & 1 & 2 & 1 & 0 \end{array}\right).\]
The \emph{static} Eisenberg-Noe clearing cash accounts, with nominal liabilities $\bar{L}$ and external assets $V(0)$, are found to be $V(1) \approx (109.38,-6.81,-3.03,-0.32,1.62)^\T$.  Further, from the static fictitious default algorithm, we can determine that bank 1 is a first-order default, bank 2 is a second-order default, and bank 3 is a third-order default.  Consider now three dynamic settings which are differentiated only by the choice of the cash flows $dx$:
\begin{enumerate}
\item Consider the deterministic setting introduced in Proposition~\ref{prop:order-defaults}, i.e., $dx(t) = \vec{0}$ for all times $t \in \bbt$.
\item Consider a Brownian bridge with low volatility, i.e., $dx(t) = -\frac{x(t)}{1-t}dt + dW(t)$ for vector of independent Brownian motions $W$.
\item Consider a Brownian bridge with high volatility, i.e., $dx(t) = -\frac{x(t)}{1-t}dt + 5dW(t)$ for vector of independent Brownian motions $W$.
\end{enumerate}
\begin{figure}[t]
\centering
\begin{subfigure}[t]{0.3\linewidth}
\centering
\includegraphics[width=\linewidth]{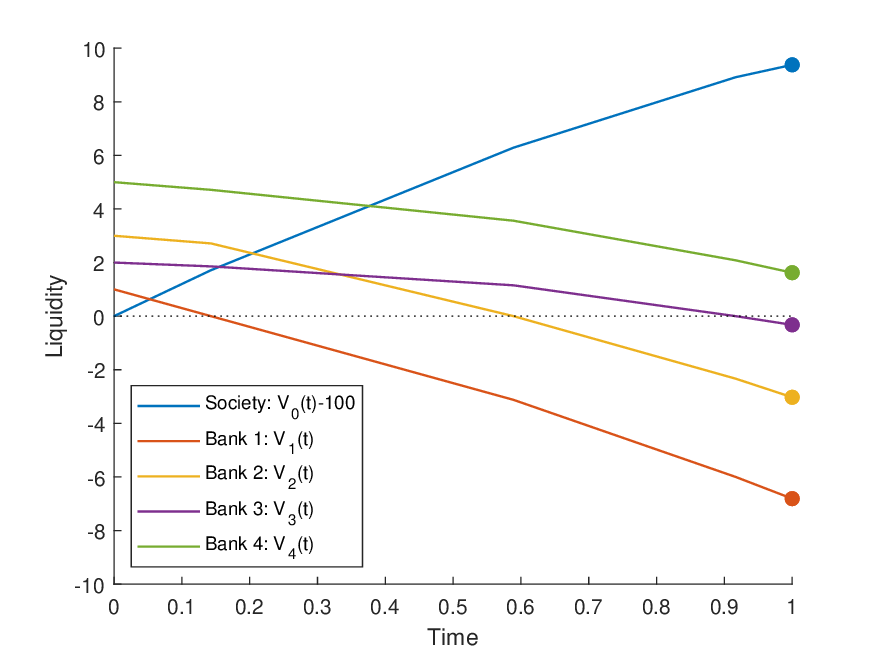}
\caption{Example~\ref{ex:differentialEN}: Clearing cash accounts over time under deterministic and constant external assets.}
\label{fig:EN-BB0}
\end{subfigure}
~
\begin{subfigure}[t]{0.3\linewidth}
\centering
\includegraphics[width=\linewidth]{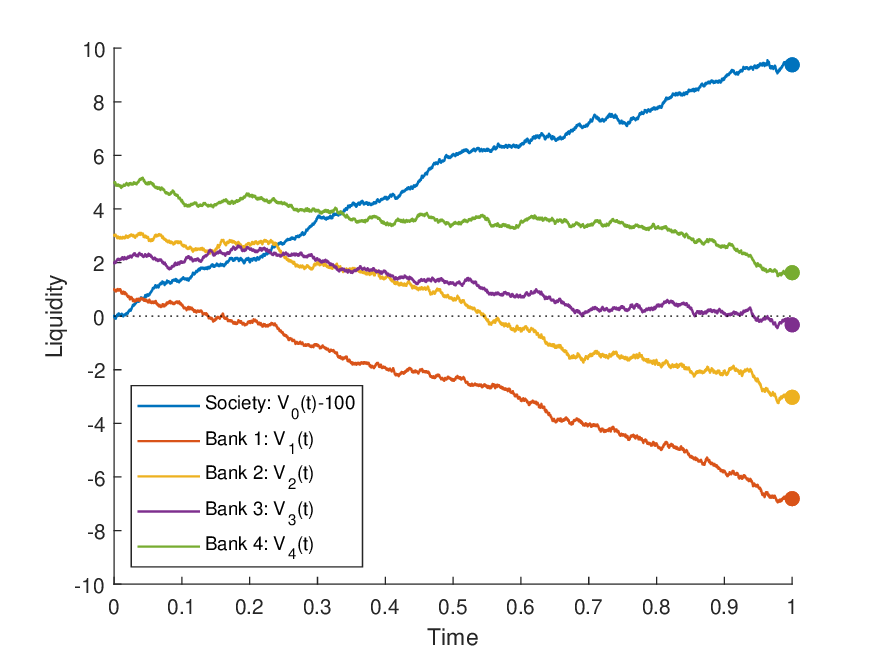}
\caption{Example~\ref{ex:differentialEN}: Clearing cash accounts over time under low volatility Brownian bridge external assets.}
\label{fig:EN-BB1}
\end{subfigure}
~
\begin{subfigure}[t]{0.3\linewidth}
\centering
\includegraphics[width=\linewidth]{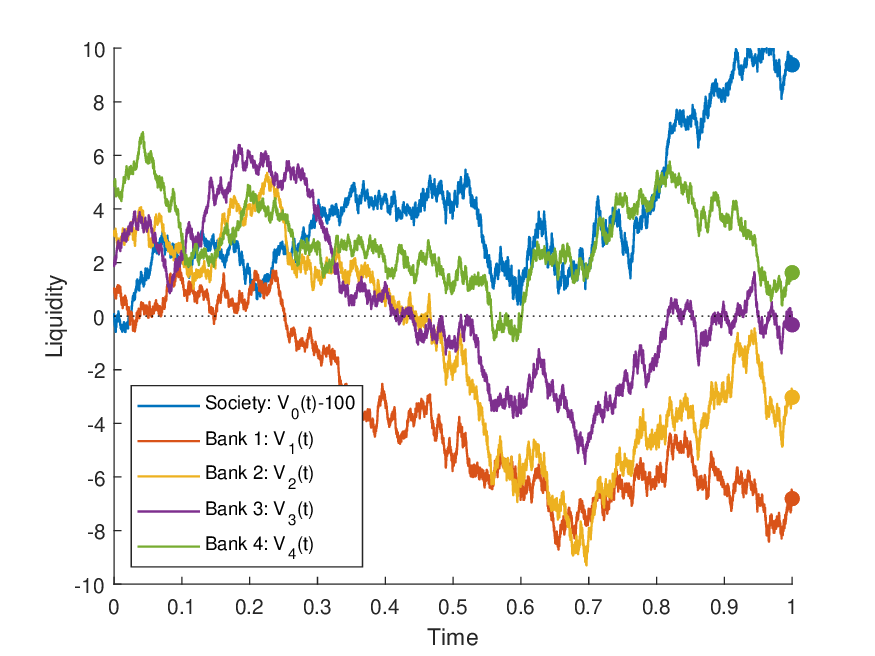}
\caption{Example~\ref{ex:differentialEN}: Clearing cash accounts over time under high volatility Brownian bridge external assets.}
\label{fig:EN-BB5}
\end{subfigure}
\caption{Example~\ref{ex:differentialEN}: Comparison of clearing cash accounts under deterministic and random cash flows that aggregate to the same terminal values.}
\label{fig:EN}
\end{figure}
A single sample path for each dynamic setting is provided.  In each plot we reduce the equity of the societal node by 100 so that it begins with an initial cash account of 0, but more importantly so that it can easily be displayed on the same plot as the other 4 institutions.  First, we point out that, as indicated by the circles at the terminal time in each plot, the terminal cash accounts of the continuous-time setting match up with the clearing cash accounts in the static model.   We further note that in the deterministic setting (Figure~\ref{fig:EN-BB0}) and the low volatility setting (Figure~\ref{fig:EN-BB1}) the order of delinquencies is maintained.  However, in the high volatility setting (Figure~\ref{fig:EN-BB5}) the order of delinquencies given by the fictitious default algorithm no longer holds.
\end{example}

\section{Static Fictitious Default Algorithm}\label{sec:FDA}
The static Eisenberg-Noe system can be computed via the, so-called, \emph{fictitious default algorithm}.  Consider the setup presented in Section~\ref{sec:setting}.  This result is presented for completeness.
\begin{alg}\label{alg:FDA}
Under the setting of Section~\ref{sec:setting}, the clearing cash accounts can be found by the following algorithm.  Initialize $k = 0$, $V^0 = x + L^\T \vec{1} - L \vec{1}$ and $D^0 = \emptyset$.  Repeat until convergence:
\begin{enumerate}
\item Increment $k = k+1$.
\item Denote the set of insolvent banks by $D^k := \{i \in \ncal \; | \; V_i^{k-1} < 0\}$.
\item If $D^k = D^{k-1}$ then terminate and set $V = V^{k-1}$.
\item Define the matrix $\Lambda^k \in \{0,1\}^{n \times n}$ so that $\Lambda_{ij}^k = \begin{cases}1 &\text{if } i = j \in D^k \\ 0 &\text{else} \end{cases}$.
\item Define $V^k = (I - \Pi^\T \Lambda^k)^{-1}(x + L^\T \vec{1} - L \vec{1})$.
\end{enumerate}
\end{alg}
A bank $i$ is defined as being a $k$th-order default if $i \in D^k \backslash D^{k-1}$.

\end{document}